\documentclass[11pt]{article}
\usepackage[margin=1in]{geometry}
\usepackage{amsmath,amssymb,amsthm}
\usepackage[hidelinks]{hyperref}
\usepackage{microtype}
\usepackage{tikz}

\newtheorem{theorem}{Theorem}[section]
\newtheorem{lemma}[theorem]{Lemma}
\newtheorem{proposition}[theorem]{Proposition}

\newtheorem{definition}[theorem]{Definition}
\theoremstyle{remark}

\newcommand{\red}{\operatorname{red}}
\newcommand{\supp}{\operatorname{supp}}
\newcommand{\spn}{\operatorname{span}}
\newcommand{\cK}{\mathcal K}
\newcommand{\cW}{\mathcal W}
\newcommand{\cC}{\mathcal C}
\newcommand{\cE}{\mathcal E}

\title{Polynomially larger deletion codes\\
by linear hashing of substring counts}
\author{Eyal En Gad}

\begin{document}
\maketitle

\begin{abstract}
We show that binary codes of length $n$ correcting two deletions exist
with redundancy $3\log_2n+O(\log_2\log_2n)$. The previous best upper bound
had leading coefficient $4$, unchanged since 1965, while the best known
lower bound has coefficient $2$. More generally, codes correcting $t\ge2$
deletions exist with redundancy
$(2t-1)\log_2n+O_t(\log_2\log_2n)$, improving the coefficient $2t$.
We extract a code from one label class of a random linear hash of substring
counts, with about $n$ times fewer labels than a direct construction. Confusable
words that still share a label are separated by a two-colouring after
discarding the words in components with odd cycles, and these are few
because an odd cycle forces the edits along it to overlap.
\end{abstract}

\section{Introduction}\label{sec:intro}

\subsection{The problem}

A \emph{word} is a finite string of the
letters $0$ and $1$, and its \emph{length} is the number of letters. A
\emph{deletion} removes one letter of a word and an \emph{insertion} puts
one letter at one of its positions, so each of the two changes the length
by one. Fix an integer $t\ge1$, the \emph{error radius}. Two words $y$
and $z$ are \emph{confusable at radius $t$} if some word is obtained from
$y$ by at most $t$ insertions and deletions in total and also from $z$ by
at most $t$ insertions and deletions in total. A set
$\cC\subseteq\{0,1\}^n$ of words of length $n$ \emph{corrects $t$
insertions and deletions} if
no two distinct words of $\cC$ are confusable at radius $t$. For sets of
words of equal length this is equivalent to correcting $t$
deletions~\cite{Levenshtein1966}.
The \emph{redundancy} of $\cC$ is
\[
   \red(\cC)=n-\log_2|\cC| ,
\]
and $\red^*(n)$ is the least redundancy of a set
$\cC\subseteq\{0,1\}^n$ correcting $t$ insertions and deletions, the
radius $t$ being fixed throughout. All logarithms are to base two. We
write $O_t(f(n))$ for a quantity whose absolute value is at most
$c_tf(n)$ for all large $n$, where $c_t$ depends only on $t$. The
question is the growth of $\red^*(n)$ in $n$ for each fixed $t$, and in
particular the coefficient of $\log_2n$ in it.

\subsection{The bounds known}\label{sec:bounds}

Levenshtein~\cite{Levenshtein1966} proved the first upper bound on
$\red^*(n)$ in 1965. The words of length $n$ confusable at radius $t$ with a
fixed word of length $n$ number at most $O_t(n^{2t})$, since such a word
is named by the positions of at most $t$ deletions and at most $t$
insertions together with the letters inserted. A set correcting $t$
insertions and deletions is an independent set in the graph on
$\{0,1\}^n$ whose edges are the confusable pairs, that graph has maximum
degree $O_t(n^{2t})$, and a greedy choice of vertices gives
\begin{equation}
   \red^*(n)\ \le\ 2t\log_2n+O_t(1) .
\label{eq:lev}
\end{equation}
In the other direction the sphere-packing argument
of~\cite{Levenshtein1966} compares the number of words with the number of
outputs one word can produce and gives
\begin{equation}
   \red^*(n)\ \ge\ t\log_2n-O_t(1) .
\label{eq:lower}
\end{equation}
At $t=1$ the two coefficients are $2$ and $1$. The
Varshamov--Tenengolts codes~\cite{VT}, whose single-deletion-correction
property was proved by Levenshtein~\cite{Levenshtein1966}, have redundancy
$\log_2n+O(1)$, so the optimal
coefficient at $t=1$ is $1$. For every $t\ge2$ the coefficient of the
upper bound has remained $2t$ since 1965, and the coefficient of the
lower bound has remained $t$. Section~\ref{sec:related} discusses related
work: an improvement of \eqref{eq:lev} in lower-order terms, and explicit
constructions, including list-decodable two-deletion codes with
coefficient $3$.

\subsection{The result}

\begin{theorem}\label{thm:intro}
For every fixed $t\ge2$ and every sufficiently large $n$,
\[
   \red^*(n)\ \le\ (2t-1)\log_2n
   +O_t(\log_2\log_2n).
\]
\end{theorem}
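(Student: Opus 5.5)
The construction follows the abstract: hash substring counts linearly at random, keep one label class, then repair the remaining conflicts with a two-colouring. Fix $t\ge2$ and let $\ell=\lceil C_t\log_2\log_2 n\rceil$ be a window length. To each word $x\in\{0,1\}^n$ attach the integer vector $N(x)\in\mathbb{Z}^{2^{\le\ell}}$ of substring counts: for each pattern $u$ of length at most $\ell$, the number of occurrences of $u$ in $x$, possibly weighted by position moments such as $\sum_i i\cdot[x_{i..i+|u|-1}=u]$, as in VT-type codes. A single insertion or deletion changes $N(x)$ by a vector $\Delta$ from a structured family. Such a $\Delta$ is determined, up to $\mathrm{poly}(\log n)$ choices, by the edit position $p$ and the local context of length $O(\ell)$ around $p$. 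The position $p$ enters only linearly, through the moment coordinates.

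Choose a random linear map $H:\mathbb{Z}^{D}\to G$ into an abelian group of size $|G|\approx n^{2t-1}(\log n)^{c}$. By pigeonhole, some label class $\{x:H(N(x))=g\}$ has size at least $2^n/|G|$. A direct construction would need $|G|\approx n^{2t}$ to separate all confusable pairs. The gain of a factor $n$ has to come from the fact that, for a confusable pair $y,z$ related by $t$ deletions and $t$ insertions, $N(y)-N(z)$ ranges over only about $n^{2t-1}\mathrm{polylog}(n)$ distinct vectors rather than $n^{2t}$.

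The reason is a cancellation. The difference vector depends on the $2t$ edit positions only through their first moments, together with local contexts. Since $|y|=|z|$, one linear combination of these positions is fixed by the others (the standard VT-type cancellation), so the difference set has one degree of freedom fewer. For each nonzero difference vector $\delta$ in this set, $\Pr[H\delta=0]\le 1/|G|$ after a suitable choice of $H$, for instance reduction modulo a random prime of the right size. Hence the expected number of confusable pairs inside the chosen class that share a label is at most
\[
  \frac{2^n}{|G|}\cdot\frac{n^{2t-1}\mathrm{polylog}(n)}{|G|},
\]
a $\mathrm{polylog}(n)/|G|\cdot n^{2t-1}$ fraction per word. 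Tuning $|G|$ and restricting to a good class, each word has $O(\mathrm{polylog}\,n)$ surviving conflict neighbours on average, but not zero. So the bound is not yet a code.

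Next, form the conflict graph $\Gamma$ on the chosen label class, with an edge for each confusable pair that survives the hash. Delete every word lying in a connected component of $\Gamma$ that contains an odd cycle. What remains is bipartite, so each component can be properly $2$-coloured. Keep one colour class, chosen globally so that it holds at least half of the surviving words. That set is independent in $\Gamma$, and, being inside a single label class, it is independent in the full confusability graph. The loss is a factor of $2$ plus the discarded words, which costs $O(1)$ redundancy if few words are discarded. The total redundancy is then $\log_2|G|+O(1)=(2t-1)\log_2 n+O_t(\log_2\log_2 n)$.

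The main obstacle is showing that few words lie in components with odd cycles. The plan is a first-moment bound on short odd closed walks in $\Gamma$, plus a separate argument that components are small, so that odd cycles of length at most $\mathrm{polylog}(n)$ suffice. Consider an odd closed walk $x_0,x_1,\dots,x_{2k+1}=x_0$. Each step is an edit pattern of $t$ insertions and deletions whose count difference lies in $\ker H$. The summed count differences around the walk vanish identically. The key combinatorial lemma I would aim for is that, when the cycle is odd, the edits cannot all be "independent": some two steps must have overlapping edit windows (within distance $O(\ell)$). That overlap removes a factor of about $n$ from the count of such walks, in the same way a coincidence of positions does. With that lemma, a union bound gives that the expected number of words on a short odd cycle is $o(2^n/|G|)$. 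I expect this overlap lemma to be the hardest step, both proving it and making it compatible with the hash randomness, since the steps around a cycle are correlated through $H$. I would first try to handle the correlation by conditioning on the set of distinct difference vectors used along the walk and requiring only that they be linearly dependent modulo $\ker H$.
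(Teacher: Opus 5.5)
Your outer architecture matches the paper's: a random linear hash of substring counts into about $n^{2t-1}$ labels, a conflict graph, discarding non-bipartite components, two-colouring, and keeping one label class. However, the two steps that carry the weight fail as stated.

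\textbf{The first step: counting partners and choosing the representation.} There is no cancellation that reduces the number of confusable partners. The $t$ deletion and $t$ insertion positions are $2t$ free parameters; equal length fixes the numbers of edits, not their positions. Moreover, the expected number of same-label partners of $x$ is a sum over partners $z$, not over distinct difference vectors. So with $|G|\approx n^{2t-1}\mathrm{polylog}(n)$ a word has about $n/\mathrm{polylog}(n)$ conflict neighbours, not polylogarithmically many. The paper accepts this and recovers the factor $n$ only in the odd-cycle analysis.

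Your window length is also too short. With $\ell=O(\log_2\log_2 n)$ and pure counts, deleting a $0$ from one occurrence of a short context $\alpha00\beta$ and inserting a $0$ into a distant occurrence of $\alpha0\beta$ leaves every count unchanged. So typical words have confusable partners that no hash can separate, and three such occurrences give deterministic triangles. Adding position moments destroys the locality you rely on, since each edit shifts all later occurrences. The paper takes $L=3(k+1)$ with $k=2\lceil\log_2n\rceil+2$ and restricts to $k$-unique words. Then the spectrum determines the word, and an overlapping local change can be located from $k$ shared letters.

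You also need to discard separately the words having same-label partners with fewer than $t$ edits, or with edits within $O(L)$ of one another or of an end (Lemma~\ref{lem:nonsep}). Only separated conflicts have differences that are sums of $2t$ disjoint local changes.

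\textbf{The second step: odd cycles.} This step must save one factor of $n$ per step, not one in total. The word $x$ has about $n^{2tm}$ walks of $m$ steps, against a survival probability of about $n^{-(2t-1)m}$. A single overlap between two steps therefore leaves the union bound over walks divergent. With degrees near $n$, you also cannot restrict to small components or short cycles.

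The missing idea is to replace walks by a bounded witness of linearly independent differences (Lemma~\ref{lem:witness}). Scan the odd closed walk, appending each difference that is independent of those kept and skipping exact repeats $\pm g$. An odd closed walk cannot use only kept differences and their negatives. So either some difference $w=\sum_{i\in I}\alpha_ig_i$ with $|I|\ge2$ appears, or a block of adjacent kept rules exceeds $L$ rules.
\begin{itemize}
\item In the first case, every local change of the $g_i$ that touches no other one must reappear in $w$. So at most $2t$ are lone and the rest pair up, giving at most $t|I|+t\le(2t-1)|I|$ connected components. Enlarging $I$ to the whole blocks it meets keeps every removed gram supplied by $x$ or an earlier kept rule, and costs at most $2t-1$ components per added rule.
\item In the second case the block has $c\le(2t-1)R+1$, and the extra component costs $n\le2^R$.
\end{itemize}
Independence over $\mathbb Q$ under the circle hash $\sum_g\xi_gp_g \bmod 1$ makes the survival probability exactly $(2/Q)^R$. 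This disposes of the correlations you worried about. The generating property together with $k$-uniqueness bounds the number of witnesses by $n^{c}(LR)^{O_t(R)}$, one position per component. Hence the expected number of surviving witnesses is at most $\sum_{R\ge2}4^{-R}$.
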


The coefficient of $\log_2n$ drops from $2t$ to $2t-1$, with a correction
term of doubly logarithmic order. Theorem~\ref{thm:intro} is proved in
Section~\ref{sec:main},
where it appears as Theorem~\ref{thm:main}. At $t=2$ it gives, together
with \eqref{eq:lower},
\[
   2\log_2n-O(1)\ \le\ \red^*(n)\ \le\
   3\log_2n+O(\log_2\log_2n),
\]
where, to the best of our knowledge, the coefficients previously known
were $2$ and $4$.

\subsection{The idea}\label{sec:idea}

We explain the idea for codes correcting two deletions;
Sections~\ref{sec:borrowed} to~\ref{sec:main} treat $t$ deletions in
general, with the same steps. By the count of confusable partners in
Section~\ref{sec:bounds}, independent random labels from a range of order
$n^{4}$ make it unlikely that a word shares its label with a confusable
partner, and the most common label among the words with no such partner
gives a code with the coefficient $4$ of~\eqref{eq:lev}. We use about $n$
times fewer labels, so a word shares its label with many of its
confusable partners. These pairs form a graph. We discard the words in
its components that contain odd cycles, together with a few other words,
and two-colour the rest; one extra bit then separates every pair. To keep
the number of discarded words small, we use a structured labelling
computed from the words.

Fix a word $x$. We show that $x$ is unlikely to lie in a component with an
odd cycle, that is, on a closed walk with an odd number of steps. Each
step of such a walk joins two words with the same label, which for a fixed
pair of words happens with probability about one over the number of
labels, about $n^{-3}$. But each word has about $n^{4}$ confusable
partners, so $x$ has about $n^{4m}$ walks of $m$ steps, and a union bound
over walks fails by a factor of about $n$ per step.

Instead, we look for a smaller object inside every such walk. Our
labelling will force every odd closed walk through $x$ to contain a
\emph{witness}: some $R$ of its steps, of a kind so restricted that there
are only about $n^{3R}$ possible witnesses with $R$ steps, a factor of
about $n$ per step fewer than walks.

To find a witness, we use the vectors from which the labels are computed:
each word is represented by a vector, and its label is a random linear
hash of that vector. Around a closed walk the differences of these vectors
along the steps add up to zero. Some of this cancellation comes from
repeats, such as a step back along an edge, but not all: linearly
independent differences cannot cancel one another, so if every difference
were linearly independent of the earlier ones or a repeat of one of them
up to sign, each would occur as often as its negative, and the walk would
be even. So some difference is a linear combination of earlier ones
without being one of them up to sign. The earlier steps in this
combination, whose differences are linearly independent, form the
witness.

To count witnesses, we look at what a step can be. The vector of a word
counts its substrings of a chosen length, its \emph{spectrum}. An edit
changes the spectrum only near the place where it occurs, so when two
words differ by two deletions and two insertions far apart, their
difference is a sum of four \emph{local changes}, one at each edit.
Choosing a step means choosing these four places, about $n^{4}$ choices.
A witness with $R$ steps has $4R$ local changes. If each needed its own
place, there would be about $n^{4R}$ witnesses, one factor of $n$ per step
too many. So we must save one place per step.

Overlap saves places: a local change that overlaps one already placed lies
close to it, so its place has far fewer than $n$ choices. And the local
changes of a witness must overlap. Write the difference found in the walk
as $d=a_1g_1+\dots+a_Rg_R$, where $g_1,\dots,g_R$ are the witness's
differences. Take a local change of some $g_i$ that overlaps no local
change of any other $g_j$. Nothing on the right cancels it, so it also
appears in $d$. But $d$ has only four local changes, so at most four local
changes of the witness stand alone, and each of the others overlaps
another one. The lone ones take a place each, and the others share places
at least in pairs, so the witness needs at most $4+(4R-4)/2=2R+2$ places.
Since $R\ge2$ because $d$ is not a repeat, this is at most $3R$. So one
place is saved per step, and that is where the coefficient $3$ comes
from. Figure~\ref{fig:overlap} shows an example.

\begin{figure}[htbp]
\centering
\begin{tikzpicture}[x=1.05cm,y=0.62cm,
  lone/.style={draw,fill=white,minimum width=0.62cm,minimum height=0.36cm,inner sep=0pt},
  shared/.style={draw,fill=black!25,minimum width=0.62cm,minimum height=0.36cm,inner sep=0pt}]
  \foreach \x in {1,...,8} \draw[black!20] (\x,-1.6) -- (\x,3.4);
  \draw[->] (0.4,-1.6) -- (8.7,-1.6) node[right] {\small place};
  \node[left] at (0.3,3) {$g_1$};
  \node[left] at (0.3,2) {$g_2$};
  \node[left] at (0.3,1) {$g_3$};
  \node[left] at (0.3,-0.6) {$d$};
  \foreach \x in {1,4} \node[lone] at (\x,3) {};
  \foreach \x in {2,3} \node[shared] at (\x,3) {};
  \node[lone] at (5,2) {};
  \foreach \x in {2,6,7} \node[shared] at (\x,2) {};
  \node[lone] at (8,1) {};
  \foreach \x in {3,6,7} \node[shared] at (\x,1) {};
  \foreach \x in {1,4,5,8} \node[lone] at (\x,-0.6) {};
  \draw (0.4,0.2) -- (8.6,0.2);
\end{tikzpicture}
\caption{A witness with $R=3$ steps for two deletions. Each row shows the
four local changes of one difference. A column is one place in the count,
not an exact position: local changes that overlap lie close together and
share a column. White ones overlap nothing and reappear in $d$; grey ones
overlap in pairs. The $12$ local changes need $8$ places, within $3R=9$.}
\label{fig:overlap}
\end{figure}

It remains to bound the probability that all the steps of a witness join
words with the same label. Since the label is a linear hash of the vector,
two words can share a label only if the difference of their vectors
passes a condition, the \emph{hash test}. A fixed nonzero difference
passes it with probability about one over the number of labels, and
linearly independent differences pass it independently. So a witness with
$R$ steps appears with probability about $n^{-3R}$, and with
slightly more than $n^{3}$ labels, a union bound over witnesses shows that
$x$ is unlikely to lie in a component with an odd cycle.

Partners with fewer edits, or with edits close to one another or to an
end, give differences that are not sums of four separate local changes.
Such partners have fewer choices, and we show that few words share a label
with one, so we discard those words too. The larger side of each remaining
component, and then the most common label, give a code with redundancy
about $3\log_2n$.

\subsection{Organization}

Section~\ref{sec:borrowed} represents words by their substring spectra
and describes the local changes caused by separated edits.
Section~\ref{sec:conflicts} defines the hash and the labels, discards the
words whose confusable partners with the same label come from too few or
too close edits, and defines the graph that is two-coloured.
Section~\ref{sec:witness} extracts a witness from any component of this
graph that contains an odd cycle, and Section~\ref{sec:resolve} counts
witnesses by describing their local changes. Section~\ref{sec:main}
combines this count with the probability that a witness passes the hash
test and selects a code, proving Theorem~\ref{thm:intro}.
Section~\ref{sec:related} discusses related work, and
Section~\ref{sec:questions} states the questions that remain.

\section{Substring spectra and local edits}\label{sec:borrowed}

The argument of Section~\ref{sec:idea} works with differences of
substring counts and describes each of them by its local changes, one at
each edit. This section makes the local changes precise. We first
restrict to a family containing a constant fraction of all words, on
which the substring counts determine the word. We then show that when two
words differ by exactly $t$ deletions and $t$ insertions, far apart from
one another and from the ends, their difference is a sum of $2t$ small
local changes, one for each edit, which do not interact. Such sums form a
fixed collection, the \emph{catalogue}.

Confusable pairs that fail these conditions, with fewer edits or with
edits too close together, are handled in Section~\ref{sec:conflicts},
which discards the words involved in such pairs when they share a label.
Every remaining confusable pair with the same label then has a catalogue
difference, and these pairs are the edges of the conflict graph studied
there.

\subsection{Substring spectra}

We first restrict to words whose sufficiently long substrings are all
distinct. Each such substring then occurs at most once, so it determines
its position in the word. We use this in two ways: the substring counts
determine the word, and a local change that overlaps one already placed
can be located from their shared letters, which is how overlaps save
places in Section~\ref{sec:idea}. Fix the error radius $t\ge2$ and the
word length $n$. Put
\[
   k=2\lceil\log_2n\rceil+2.
\]
For a positive integer $r$, a word is \emph{$r$-unique} if its substrings
of length $r$ are pairwise distinct. Write $\cW_{n,k}$ for the set of
$k$-unique binary words of length $n$. This restriction retains a
constant fraction of all words, by the following elementary union bound
(see, for example, \cite[Sec.~III]{RepeatFree}).

\begin{lemma}[Most words are $k$-unique]
\label{lem:family}
With $k=2\lceil\log_2n\rceil+2$,
\[
   |\cW_{n,k}|\ \ge\ \Bigl(1-\binom n2 2^{-k}\Bigr)2^n\ \ge\ \tfrac78\,2^n
   \ \ge\ 2^{n-1}.
\]
\end{lemma}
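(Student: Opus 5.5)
The plan is a union bound over pairs of starting positions. A word fails to be $k$-unique exactly when there are positions $1\le i<j\le n-k+1$ with $x_{i..i+k-1}=x_{j..j+k-1}$. There are at most $\binom n2$ such pairs $(i,j)$. So the first inequality follows once we show that, for each fixed pair, the number of words $x\in\{0,1\}^n$ with this coincidence is exactly $2^{n-k}$. The probability over a uniform random word is then $2^{-k}$, and subtracting the union of these bad events from $2^n$ gives $|\cW_{n,k}|\ge(1-\binom n2 2^{-k})2^n$.

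The count for a fixed pair $i<j$ is the only step needing care, because the two windows may overlap when $j-i<k$. I will handle both cases at once by revealing letters left to right. The condition says $x_{j+s}=x_{i+s}$ for $s=0,\dots,k-1$. Each such equation fixes the letter at position $j+s$ in terms of the letter at $i+s$, and $i+s<j+s$. If the letters $x_1,\dots,x_{j-1}$ and all positions outside $\{j,\dots,j+k-1\}$ are chosen freely, then the positions $j,\dots,j+k-1$ are filled in increasing order. Each is forced by an earlier, already determined position, whether that position is free or itself forced. Hence the map from the $n-k$ free letters to words satisfying the condition is a bijection, giving exactly $2^{n-k}$ words. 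This is the one place an overlap issue could arise, and the left-to-right determination disposes of it.

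The remaining inequalities are arithmetic. Since $k=2\lceil\log_2n\rceil+2$, we have $2^{k}\ge 4n^2$. Therefore $\binom n2 2^{-k}\le \frac{n^2}{2}\cdot\frac1{4n^2}=\frac18$, which gives the factor $\frac78$, and $\frac78\ge\frac12$ gives $2^{n-1}$. If $n<k$, there are no pairs of windows, every word is trivially $k$-unique, and the bound holds vacuously.
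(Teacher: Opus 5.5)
Your proof is correct and is the same elementary union bound over pairs of window positions that the paper invokes; the paper does not write the argument out but cites the repeat-free codes literature for it. Your left-to-right determination of the forced letters gives the exact count $2^{n-k}$ per pair, including when the two windows overlap, which is the detail the paper leaves implicit.
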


Fix an integer $L>k$; its precise choice will ensure that the path
representation below survives one edit. A word of length
$L$ is an \emph{$L$-gram}. For a word $y$ let
$R_L(y)\in\mathbb Z^{2^L}$ be its \emph{spectrum}, whose
coordinate at an $L$-gram $g$ is the number of occurrences of $g$ in $y$
as a substring. Such counts are standard in string matching and in coding
for DNA storage; see Section~\ref{sec:related}.

Consecutive $L$-grams overlap in $L-1$ letters. We record these
overlaps in the de Bruijn graph: its vertices are the words of length
$L-1$. Each word $a_1\cdots a_L$ labels a directed edge from its prefix
$a_1\cdots a_{L-1}$ to its suffix $a_2\cdots a_L$. A \emph{walk} in
this graph spells a word, and a word of length $h\ge L$ spells a walk
of $h-L+1$ edges. Since $k\le L-1$, every $k$-unique word is
$(L-1)$-unique: two equal substrings of length $L-1$ at different
positions would begin with equal substrings of length $k$. For such a
word, the next lemma shows that the $L$-grams counted by its spectrum,
viewed as edges, form a simple path in this graph. A simple path can be
followed in only one way, so the spectrum determines the word.

\begin{lemma}[The spectrum is a simple path, and determines the word]
\label{lem:path}
Let $y$ be an $(L-1)$-unique word of length $n\ge L$. Then $R_L(y)$ is
\emph{Boolean} (all entries are zero or one), and the edges it carries
form a simple directed path
\[
   y_1\cdots y_{L-1}\ \to\ y_2\cdots y_{L}\ \to\ \cdots\ \to\
    y_{n-L+2}\cdots y_n
\]
in the de Bruijn graph, with pairwise distinct vertices. Moreover, if
$z\in\{0,1\}^n$ satisfies $R_L(z)=R_L(y)$, then $z=y$.
\end{lemma}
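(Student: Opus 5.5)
The plan is to read everything off the sequence of $L$-grams of $y$ at positions $i=1,\dots,n-L+1$, namely $e_i=y_i\cdots y_{i+L-1}$, viewed as edges of the de Bruijn graph from $v_i=y_i\cdots y_{i+L-2}$ to $v_{i+1}=y_{i+1}\cdots y_{i+L-1}$.

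First I show that $R_L(y)$ is Boolean. If $e_i=e_j$ with $i\ne j$, then their length-$(L-1)$ prefixes $v_i$ and $v_j$ coincide. These are substrings of length $L-1$ of $y$ at different positions, which contradicts $(L-1)$-uniqueness. So every $L$-gram occurs at most once, and the entries of the spectrum are $0$ or $1$, with support $\{e_1,\dots,e_{n-L+1}\}$.

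Next, the vertices $v_1,\dots,v_{n-L+2}$ are exactly the substrings of length $L-1$ at positions $1,\dots,n-L+2$, so $(L-1)$-uniqueness says directly that they are pairwise distinct. Consecutive edges chain together, since $e_i$ ends at $v_{i+1}$ and $e_{i+1}$ starts there. Hence the edges form the displayed simple directed path.

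For uniqueness, take $z\in\{0,1\}^n$ with $R_L(z)=R_L(y)$. I will not assume $z$ is unique, which is the point needing care. The spectrum of $z$ is the same Boolean vector, so $z$ has $n-L+1$ windows $f_1,\dots,f_{n-L+1}$ whose multiset equals the set $\{e_i\}$, each used exactly once. Consecutive windows of $z$ chain in the de Bruijn graph, so $f_1,\dots,f_{n-L+1}$ is a walk using each edge of the simple path $P$ exactly once. In $P$ the vertex $v_{i}$ for $i\ge2$ has exactly one incoming edge and $v_1$ has none among the edges of $P$. A walk through these edges must therefore start at $v_1$: if $f_1=e_j$ with $j>1$, then $e_1$ must appear later, entered from the previous edge's head, which would be $v_1$. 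But no edge of $P$ ends at $v_1$, which is a contradiction. So $f_1=e_1$. Inductively, after $f_i=e_i$ the walk sits at $v_{i+1}$, and the only edge of $P$ leaving $v_{i+1}$ is $e_{i+1}$, because the vertices are distinct and each vertex has out-degree at most one in $P$. Hence $f_{i+1}=e_{i+1}$. Thus all windows agree, and in particular $z=y$, since $z$ is spelled by its walk ($z_1\cdots z_L=e_1$ and each later letter is the last letter of $e_i$).

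The only real subtlety is this last step: ruling out a $z$ that rearranges the edges. The in- and out-degree argument on a simple path handles it, and I expect no obstacle beyond stating it carefully.
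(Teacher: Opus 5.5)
Your proof is correct and follows essentially the same route as the paper: distinctness of the length-$(L-1)$ windows gives the simple path, and uniqueness follows by forcing the walk of $z$ to start at the indegree-zero vertex $v_1$ and then take the unique outgoing edge of the path at each step. The only cosmetic difference is that you derive Booleanness directly from equal prefixes, whereas the paper reads it off from the path being simple.
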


\begin{proof}
The vertices listed are the substrings of $y$ of length $L-1$, pairwise
distinct by $(L-1)$-uniqueness, so the walk that $y$ spells is a simple
path, and each of its edges is carried once, which makes $R_L(y)$
Boolean. Along a simple path the initial vertex is the unique vertex of
indegree zero, the terminal vertex is the unique vertex of outdegree
zero, and every other vertex has indegree and outdegree one. A word $z$
of length $n$ with $R_L(z)=R_L(y)$ spells a walk of $n-L+1$ edges using
exactly the edges of that path, each once. It uses the edge leaving the
indegree-zero vertex, and it can be at that vertex only at its start,
since no edge of the path enters it; so the walk begins there. At most one
edge of the path leaves each vertex, so every step of the walk is forced,
and the walk follows the path edge by edge. Hence $z$ spells the same path
and $z=y$.
\end{proof}

\subsection{Alignments and confusable words}

A sequence of insertions and deletions is called an
\emph{edit script}. A \emph{subsequence} is obtained by deleting zero or
more letters while retaining the order of those kept.

An \emph{alignment} from a word $y$ to a word $z$ places all letters of
the two words, in their original order, into columns. Each column
contains either two equal letters, one from each word, a \emph{matched column};
a letter of $y$ alone, a \emph{deletion}; or a letter of $z$ alone, an
\emph{insertion}. The unmatched columns are the \emph{edits} of the
alignment. The matched columns spell a common subsequence.

Two edits are \emph{separated by $B$ matched columns} if at least $B$
matched columns lie strictly between them, and an edit is separated from
a boundary by $B$ matched columns if at least $B$ matched columns lie
between it and that end of the alignment.

\begin{lemma}[Equal-length confusability]\label{lem:confuse}
Let $y\ne z$ be words of length $n$ such that some word $u$ is obtained
from $y$ by at most $t$ insertions and deletions in total and also from
$z$ by at most $t$ insertions and deletions in total. Then some
alignment from $y$ to $z$ has $d$ deletions and $d$ insertions with
$d\le t$.
\end{lemma}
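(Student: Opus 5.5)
The plan is to pass through a longest common subsequence of $y$ and $z$ and to count deletions, using the standard fact that insertions and deletions relate to common subsequences. Let $u$ be the given word. Obtaining $u$ from $y$ by $a$ insertions and $b$ deletions with $a+b\le t$ can always be arranged so that no inserted letter is later deleted; then the letters of $y$ that survive form a common subsequence of $y$ and $u$ of length $n-b$. Similarly, $z$ and $u$ share a common subsequence of length $n-b'$, where $b'$ is the number of deletions on that side. The first step is to bound $b$ and $b'$ in terms of $t$ using the equal lengths: we have $|u|=n+a-b=n+a'-b'$, so $a-b=a'-b'$.

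The second step produces a common subsequence of $y$ and $z$ of length at least $n-t$. Write $s$ for the positions of $u$ that come from surviving letters of $y$, and $s'$ for those that come from surviving letters of $z$. Then $|s|=|u|-a$ and $|s'|=|u|-a'$. Any position of $u$ lying in both $s$ and $s'$ carries a letter matched to a letter of $y$ and to a letter of $z$, and these matches respect order, so the positions of $s\cap s'$ give a common subsequence of $y$ and $z$. Its length is at least $|u|-a-a'$. Writing $|u|=n+a-b$, this is $n-b-a'$. Adding the two relations $a+b\le t$ and $a'+b'\le t$ together with $a-b=a'-b'$ gives $2(b+a')\le 2t$, since $b+a'=a+b'$ and $(b+a')+(a+b')\le 2t$. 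So the common subsequence has length at least $n-t$.

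The third step converts a common subsequence $w$ of length $n-d$ into an alignment. Set the letters of $w$ as matched columns. Between consecutive matched letters, and before the first and after the last, place the leftover letters of $y$ as deletion columns and those of $z$ as insertion columns, in their original order. Each word has exactly $n-(n-d)=d$ leftover letters, so there are $d$ deletions and $d$ insertions. Taking $w$ from the second step gives $d\le t$. Since $y\ne z$ we also get $d\ge1$, though the statement does not require it.

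The step I expect to need care is the first one, namely justifying that an edit script can be normalised so that inserted letters are never deleted and so that surviving letters are matched in order. I would handle it by noting that an insertion followed by a deletion of the same letter can be cancelled, which only shortens the script. Alternatively, one can define $s$ directly as the set of letters of $u$ descended from original letters of $y$ under the script. In that case the count $|s|\ge |u|-a$ holds without any normalisation, because only the inserted letters can fail to be descended from $y$. I would take this direct route.
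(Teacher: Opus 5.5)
Your proposal is correct and is essentially the paper's proof. The paper composes the script from $y$ to $u$ with the reversed script from $z$ to $u$ into one script from $y$ to $z$; equal lengths force its $b+a'$ deletions and $a+b'$ insertions to balance, which is exactly your identity $b+a'=a+b'\le t$, and your set $s\cap s'$ is the set of original letters of $y$ that survive this composed script. Both arguments then match these at least $n-t$ survivors and place each leftover letter in its own column.
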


\begin{proof}
Follow the edit script from $y$ to $u$, then undo the script from $z$ to $u$,
reversing its order and interchanging insertions with deletions. This
gives a script from $y$ to $z$ with at most $2t$ edits.
Equal lengths force the script to have the same number of deletions
and insertions, hence at most $t$ of each. At least $n-t$ original
letters of $y$ therefore survive and occur in $z$ in their original
order. Match these surviving letters and place each remaining letter
in a separate column. Since both words have length $n$, the resulting
alignment has $d$ deletions and $d$ insertions with $d\le t$.
\end{proof}

\subsection{Stability under one edit}

To describe an edit to a word $x\in\cW_{n,k}$ by its local change, we
need the simple-path representation of Lemma~\ref{lem:path} to hold also
for words one edit away from $x$. We prove this by tracing letters back
to their positions in $x$: equal sufficiently long substrings after an
edit must contain the same unchanged length-$k$ substring of $x$, at
corresponding offsets. We use the same observation below to show that
the local changes of edits far apart do not overlap. The proof needs
$L-3\ge3k$, so we choose
\begin{equation}
   L=3(k+1).
\label{eq:lengths}
\end{equation}
We henceforth take $n$ sufficiently large that $n\ge L$.
An occurrence of a fixed-length substring is called a \emph{window}.

\begin{lemma}[Shared source letters]\label{lem:rigidity}
Let $x\in\cW_{n,k}$, and let $y$ and $z$ be obtained from $x$ by at most
one edit each. Fix these edits, so that every surviving letter of $y$ and
$z$ has a known position in $x$. Any two windows of length $L-1$ that
spell the same string, one in $y$ and one in $z$, contain $k$ consecutive
letters that come from the same $k$ consecutive letters of $x$, at the
same offsets within the two windows. Consequently every word within one edit of $x$ is
$(L-1)$-unique.
\end{lemma}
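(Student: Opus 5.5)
The plan is to work window by window, tracking where each letter came from in $x$. Fix the edits producing $y$ and $z$ from $x$. Every letter of $y$ is then either an inserted letter or a surviving letter with a known position in $x$. The map from surviving letters of $y$ to positions of $x$ is injective and order preserving. It sends consecutive letters of $y$ to consecutive positions of $x$, with at most one exception: two consecutive surviving letters that straddle the deleted position, or the two letters that surround the inserted one.

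Take windows $W$ in $y$ and $W'$ in $z$ of length $L-1$ that spell the same string, and index both by offsets $1,\dots,L-1$. Mark as bad any offset at which $W$ or $W'$ holds an inserted letter; there are at most two such offsets. Cut between offsets $j$ and $j+1$ whenever the sources in $W$, or in $W'$, of the letters at these offsets are not consecutive in $x$; there are at most two cuts. Each edit contributes at most one bad offset or one cut, not both. So the two edits together split the remaining offsets into at most three intervals, of total length at least $L-3$.

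Within each interval, the letters of $W$ come from consecutive letters of $x$, and so do the letters of $W'$. Since $L-3=3k$ by \eqref{eq:lengths}, pigeonhole gives an interval of at least $k$ offsets. Take $k$ consecutive offsets inside it. The letters of $W$ there spell a length-$k$ substring of $x$ at some position $p$, and the letters of $W'$ spell the same string, since the windows agree, at some position $p'$. Because $x\in\cW_{n,k}$ is $k$-unique, $p=p'$. Hence these $k$ letters of both windows come from the same $k$ consecutive letters of $x$, at the same offsets, which is the first claim.

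For the consequence, take $z=y$ with the same fixed edit, and suppose two windows of length $L-1$ at starting positions $s\ne s'$ in $y$ spell the same string. The first part gives an offset $j$ at which the letters at positions $s+j-1$ and $s'+j-1$ of $y$ come from the same letter of $x$. Injectivity of the source map then gives $s=s'$, a contradiction. So every word within one edit of $x$ is $(L-1)$-unique.

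The step needing most care is the bookkeeping that bounds the number of pieces by three. One must check that an insertion only removes one offset, with no additional break. The letters on either side of an inserted letter are consecutive in $x$ unless they fall at a window end, which is harmless. A deletion, in turn, only introduces one break. I would verify these two facts separately for each edit type before combining them.
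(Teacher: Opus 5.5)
Your proposal is correct and follows the paper's proof essentially step for step. You discard the offsets holding inserted letters, split at the jump caused by a deletion, obtain at most three intervals of total length at least $L-3=3k$, apply $k$-uniqueness to a length-$k$ stretch, and derive $(L-1)$-uniqueness by taking $z=y$ and noting that no letter of $x$ survives at two positions.

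One cosmetic correction: the two letters surrounding an inserted letter are not consecutive in $y$. So an insertion creates no exception to the consecutiveness of sources; its only effect is the single removed offset, as your final paragraph in fact says.
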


\begin{proof}
Label each surviving letter by its original index in $x$ and compare
the two windows position by position. Remove each position
occupied by an inserted letter in either window, leaving a gap there.
Split the remaining positions into intervals at these gaps and wherever
a deletion makes the original indices jump. Each insertion removes at
most one position, and each deletion introduces at most one further
split. With at most two edits, at most three intervals remain. Within
each interval, both sequences of original indices are consecutive.
At most two of the $L-1$ positions were removed, so the intervals have
total length at least $(L-1)-2=L-3=3k$. Thus one interval has length at
least $k$. Take $k$ consecutive positions in it. They spell equal
length-$k$ substrings of $x$, so their original starting indices agree
by $k$-uniqueness.

To prove uniqueness, suppose two windows in one word spell the same
string and start at distinct positions $a$ and $b$. Apply the preceding
argument with $z=y$ and the same edit for both windows. At some common offset $s$ from their
starts, the letters at positions $a+s$ and $b+s$ would carry the same
original index in $x$. Since $a\ne b$, one original letter would then
survive at two different positions. Insertions and deletions cannot
duplicate an original letter, so this is impossible.
\end{proof}

\subsection{Bubbles and separated edits}

We now describe one edit as a replacement of paths, and then add these
replacements to describe a whole conflict. The spectrum difference
always means the counts after the edit minus the counts before it.
Occurrences far from the edit are unchanged and cancel, so only a
local part of each word matters.

A \emph{constant run} is a maximal consecutive stretch of equal bits.
Deleting any letter of such a run gives the same word: for example, a
local string $A\,000\,B$ becomes $A\,00\,B$. We describe the edit by
the run it shortens or lengthens, together with enough surrounding
letters that the two local words have the same first and last $L-1$
letters. Their de Bruijn walks therefore have the same starting and
ending vertices.

We call the pair a \emph{bubble} when neither walk repeats a vertex and
they meet only at those two endpoints. Thus a bubble consists of two
different routes between the same two vertices: the edit removes one
route and introduces the other.

For a small illustration of this graph shape, use length-$4$ grams.
Deleting one zero from $11\,00\,10$ gives $11\,0\,10$. Their walks list
the successive length-$3$ substrings:
\[
\begin{aligned}
   \text{before deletion:}\quad &110\ \to\ 100\ \to\ 001\ \to\ 010,\\
   \text{after deletion:}\quad  &110\ \to\ 101\ \to\ 010.
\end{aligned}
\]
Both routes start at $110$ and end at $010$, and no other vertex is
shared. The deletion removes the grams $1100,1001,0010$ and introduces
$1101,1010$. Its spectrum difference therefore has entry $-1$ at the
first three grams, $+1$ at the last two, and $0$ elsewhere. These signs
still identify the two routes. Reversing the edit reverses all the signs.
Figure~\ref{fig:bubble} shows the two routes directly.

\begin{figure}[htbp]
\centering
\begin{tikzpicture}[x=1.25cm,y=0.9cm,every node/.style={font=\small}]
  \node (s) at (0,0) {$110$};
  \node (a) at (1.5,1.0) {$100$};
  \node (b) at (3.0,1.0) {$001$};
  \node (t) at (4.5,0) {$010$};
  \node (c) at (2.25,-1.0) {$101$};
  \draw[->,thick] (s) -- node[above left] {$1100$} (a);
  \draw[->,thick] (a) -- node[above] {$1001$} (b);
  \draw[->,thick] (b) -- node[above right] {$0010$} (t);
  \draw[->,thick,dashed] (s) -- node[below left] {$1101$} (c);
  \draw[->,thick,dashed] (c) -- node[below right] {$1010$} (t);
  \node at (2.25,1.75) {removed route};
  \node at (2.25,-1.75) {introduced route};
\end{tikzpicture}
\caption{A bubble for one deletion. Deleting one zero from $110010$ to obtain
$11010$ replaces the solid route by the dashed route between the same two
vertices of the de Bruijn graph. The routes meet only at their endpoints.}
\label{fig:bubble}
\end{figure}

For our chosen value of $L$, the following definition specifies the local
words used to describe an edit. Requiring the paths to meet only at their
endpoints ensures that no gram is both removed and introduced. To
combine several edits, we will require their bubbles to share no vertex,
so each local change remains identifiable in their sum.

\begin{definition}[Bubbles and the catalogue]\label{def:catalogue}
Let $d$ be a bit and let $\rho$ be an integer with $1\le\rho\le k+1$.
Choose words $A,B$ of
length $L-\rho$ such that the last bit of $A$ and the first bit of $B$
are both $1-d$. Form the longer and shorter local words
\begin{equation}
   P=A\,d^{\rho}\,B,\qquad P'=A\,d^{\rho-1}\,B.
\label{eq:bubble}
\end{equation}
Here $d^\rho$ means $\rho$ consecutive copies of $d$, and $d^0$ denotes
the empty word. The boundary bits of $A$ and $B$ ensure that $d^\rho$
is a whole constant run in $P$. Deleting one of its letters gives $P'$;
reinserting a $d$ at the same place gives $P$. When $\rho=1$, the edit
removes or creates a run of one letter.

The words $A d^{\rho-1}$ and $d^{\rho-1}B$ each have length $L-1$.
They are the common first and last $L-1$ letters of $P$ and $P'$,
so they are the starting and ending vertices of both walks.
The pair is a \emph{bubble} if each walk is simple and they share no
other vertex. The paths spelled by $P$ and $P'$ have $L-\rho+1$ and
$L-\rho$ edges, respectively.

A bubble has two \emph{orientations}, according to which path replaces
the other. Their signed spectrum differences are
\[
\begin{aligned}
   \text{deletion orientation:}\quad &P\longmapsto P',
       &&R_L(P')-R_L(P),\\
   \text{insertion orientation:}\quad &P'\longmapsto P,
       &&R_L(P)-R_L(P').
\end{aligned}
\]
The removed path is the \emph{negative path}, and the introduced path
is the \emph{positive path}. The spectrum difference has entry $-1$ on
each edge of the negative path, $+1$ on each edge of the positive path,
and $0$ elsewhere. Conversely, its negative and positive entries specify
the two paths, so we also call this signed vector a \emph{bubble}.
An $L$-gram on the negative path is a \emph{negative gram}.

A \emph{rule} describes the combined change for $t$ deletions and $t$
insertions: choose $t$ bubbles in each orientation, require that
different bubbles share no vertex, including their endpoints, and
add their signed spectrum differences. The \emph{catalogue} $\cK_0$
is the set of all rules.
\end{definition}

For a vector $p$ indexed by $L$-grams, its \emph{support}
$\supp(p)=\{g:p_g\ne0\}$ is the set of $L$-grams at its nonzero
coordinates. We view these grams as edges of the de Bruijn graph.
The two paths of a bubble share their endpoints, so the edges in its
support form a connected subgraph when edge directions are ignored.
Since the bubbles of a rule share no vertex, they are exactly the
$2t$ connected components of its support. Thus the rule, although
recorded as one vector, still determines its individual bubbles.
There is no cancellation between bubbles: every rule is nonzero
and has entries in $\{-1,0,1\}$. Negating a rule reverses every edit,
so $\cK_0$ is closed under negation.

These rules are defined without reference to a particular length-$n$
word or a random label: a rule records which substrings its edits remove
and introduce, not where in a word they occur. Lemma~\ref{lem:confuse}
allows up to $t$ deletions and $t$ insertions between confusable words.
The next lemma handles exactly $t$ of each, sufficiently separated:
each edit gives one of the bubbles above, and their sum is a rule.
Section~\ref{sec:conflicts} will discard the exceptional equal-label
conflicts for which these conditions fail.

\begin{lemma}[Coverage of separated conflicts]\label{lem:coverage}
Let $x$ be $k$-unique and let an alignment from $x$
have exactly $t$ deletions and exactly $t$ insertions, every two of them
separated by $4L$ matched columns and each separated from both
boundaries by $4L$ matched columns. Then the difference of the spectrum
of the target and the spectrum of $x$ lies in $\cK_0$.
\end{lemma}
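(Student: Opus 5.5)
We realize the alignment as a sequence of single edits applied to $x$ in turn, and track the spectrum difference edit by edit. Order the $2t$ edits from left to right. Since every two are separated by $4L$ matched columns, we can cut the common letters into disjoint segments. Then the target $z$ and $x$ agree outside $2t$ disjoint local stretches, each of length about $2L$ around one edit. We first fix, for each edit, its local words. For a deletion of a letter $d$, extend the deletion to the whole constant run $d^\rho$ containing it in $x$. Any letter of the run may be taken as the deleted one, so the run is well defined. Then take $A$ to be the $L-\rho$ letters of $x$ before the run and $B$ the $L-\rho$ letters after it. For an insertion, do the same in $z$. This requires $\rho\le k+1$, which holds because a run of $k+1$ equal letters would contain two equal overlapping windows of length $k$, contradicting $k$-uniqueness of $x$. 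For an insertion the lengthened run in $z$ has at most one more letter, and we argue similarly using $x$'s run of length $\rho-1\le k$. The separation by $4L$ columns from the boundaries guarantees that $A$ and $B$ exist inside the word and consist of matched letters. By maximality of the run, the boundary bits are $1-d$.

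Next we show that each such pair $(P,P')$ is a bubble. Each of $P$ and $P'$ is a substring of $x$ or of the word $x^{(1)}$ obtained from $x$ by this single edit. By Lemma~\ref{lem:rigidity} both words are $(L-1)$-unique, so each walk is simple. To see that the walks share no interior vertex, suppose a window of length $L-1$ in $P$ equals one in $P'$. Apply Lemma~\ref{lem:rigidity} with $y=x$ and $z=x^{(1)}$. The two windows then contain a common length-$k$ block coming from the same letters of $x$ at the same offsets. Hence they have the same alignment to $x$, which forces both windows to be the common prefix $Ad^{\rho-1}$ or the common suffix $d^{\rho-1}B$. A window straddling the edit point is shifted by one relative to its counterpart, so the source offsets disagree. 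This shift argument is the step I expect to be the main obstacle, and I would make it precise by comparing original indices on either side of the run.

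Then we add the edits. Pass from $x$ to $z$ through intermediate words $x=x^{(0)},x^{(1)},\dots,x^{(2t)}=z$, applying one edit at a time. Each $x^{(j)}$ agrees with $x$ near edit $j+1$ because the edits are $4L$ apart, so the local words of that edit are the same in $x^{(j)}$ as computed above. Only $L$-grams overlapping the local stretch change. The grams meeting the run, together with the $L-\rho$ context letters on each side, are exactly the edges of the two walks, so $R_L(x^{(j+1)})-R_L(x^{(j)})$ equals the signed bubble difference. Summing over $j$ telescopes to $R_L(z)-R_L(x)$, which is a sum of $t$ bubbles in deletion orientation and $t$ in insertion orientation. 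Finally, distinct bubbles share no vertex. Their vertices are windows of length $L-1$ lying in stretches of $x$ (or $z$) that are more than $L$ apart. A shared vertex would give equal windows at distinct places, and Lemma~\ref{lem:rigidity}, applied with the two relevant single-edit words, traces them to the same letters of $x$. That is impossible for disjoint stretches. Hence the sum is a rule and lies in $\cK_0$.
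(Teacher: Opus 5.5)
Your outline follows the paper's proof: local words $P=A\,d^{\rho}B$ and $P'=A\,d^{\rho-1}B$ built from the maximal run, simple walks and non-interaction of distinct bubbles from Lemma~\ref{lem:rigidity} with disjoint source stretches, and the sum by the single-edit replacement identity. Your telescoping through $x^{(0)},\dots,x^{(2t)}$ is the same step.

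The gap is the step you flag yourself: that the two walks share only their endpoints. The reasoning you sketch for it does not work.
\begin{itemize}
\item Lemma~\ref{lem:rigidity} gives one $k$-block carrying the same source letters at the same offsets. It does not give the whole windows ``the same alignment to $x$''. Indeed, when $\rho\ge2$, at least one of the two genuine shared endpoints traces to source letters that are shifted by one inside the run, whichever run letter you regard as deleted.
\item So ``a straddling window is shifted by one, hence the source offsets disagree'' is no contradiction. The matched block can lie entirely on the unshifted side.
\item Comparing original indices alone cannot finish the argument. The index bookkeeping is identical for a non-maximal run: take only the last $\rho<\rho^*$ letters of a run of length $\rho^*$, so $A$ ends in $d$. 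In that case the claim is false. The two walks are the genuine $\rho^*$-bubble walks, each extended by the same $\rho^*-\rho$ final edges, so they share an interior vertex.
\end{itemize}

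The missing idea is to combine the index information with the letter values at the run boundary, which is where maximality enters. Let the equal windows start at offset $a$ in $P$ and $b$ in $P'$. A letter of the matched block at common window offset $r$ sits at offsets $a+r$ and $b+r$. Since $P$ and $P'$ start at the same position and differ by one edit, $a-b\in\{0,1\}$. Now use that the windows spell the same string:
\begin{itemize}
\item $a=b\ge1$ is impossible, since the windows would compare $P[L-1]=d$ with $P'[L-1]=1-d$, the first bit of $B$.
\item $a=b+1$ with $b<L-\rho$ is impossible, since they would compare $P[L-\rho]=d$ with $P'[L-\rho-1]=1-d$, the last bit of $A$.
\end{itemize}
Only $a=b=0$ and $(a,b)=(L-\rho+1,L-\rho)$ remain, and these are the two shared endpoints.

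A smaller point: justify $R_L(XP'Y)-R_L(XPY)=R_L(P')-R_L(P)$ by the fact that $P$ and $P'$ share their first and last $L-1$ letters, so the exterior parts of the walks coincide. An $L$-gram that meets the run but starts before $P$ (possible when $\rho\ge2$) is unchanged; it is not an edge of the bubble.
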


\begin{proof}
\emph{Local replacements.} Every run of $x$ has length at most $k$,
since a longer run repeats a length-$k$ substring. Write $d$ for the letter
that the edit deletes or inserts. A deletion shortens the maximal
$d$-run containing its letter. An insertion lengthens the $d$-run
containing or adjacent to its gap; if there is no such run, it creates
a run of length one. Thus the longer and shorter local words
are $P=A\,d^\rho\,B$ and $P'=A\,d^{\rho-1}\,B$, where
$1\le\rho\le k+1$, $|A|=|B|=L-\rho$, and the last and first letters
of $A$ and $B$ are $1-d$.

To locate these substrings in the source, let $a$ be the number of letters
of $x$ preceding the edit column, counting deleted letters as well as
matched ones. The source substring, $P$ for a deletion and $P'$ for an
insertion, lies in the source interval $[a-L+1,a+L]$: its run contains
or adjoins the edit, and $A$ and $B$ each have length $L-\rho$.
The separation hypothesis puts these intervals inside $x$ and makes
them pairwise disjoint, since the source counts of consecutive edits
differ by at least $4L$. In particular, the required substrings $A$ and
$B$ exist.

With unchanged exterior words $X,Y$, a single replacement changes
$XPY$ to $XP'Y$ or conversely. The local words share their first $L-1$ letters
$A d^{\rho-1}$ and their last $L-1$ letters $d^{\rho-1}B$, so the
exterior portions of the de Bruijn walks coincide. Hence
\[
   R_L(XP'Y)-R_L(XPY)=R_L(P')-R_L(P).
\]
A deletion has negative path $P$ and positive path $P'$; an insertion
reverses them.

\emph{The two paths meet only at their endpoints.} Write $P$ and $P'$
for the longer and shorter local words, in either orientation. One is a substring of $x$;
the other is a substring of a word obtained from $x$ by one edit. Both
local words start at the same position $s$ in their respective full
words. By Lemma~\ref{lem:rigidity} both walks are simple. Their initial
vertices are $A d^{\rho-1}$ and their terminal vertices are
$d^{\rho-1}B$, the common first and last $L-1$ letters.
Index the local letters from $0$, so the run of $P$ occupies
$[L-\rho,L-1]$, the last letter of $A$ is at $L-\rho-1$, and the first
letter of $B$ in $P$ is at $L$.

Suppose the vertex window of $P$ at offset $a$ equals that of $P'$ at
offset $b$, where $0\le a\le L-\rho+1$ and $0\le b\le L-\rho$.
Lemma~\ref{lem:rigidity} supplies corresponding surviving source
letters at the same relative offsets of these windows. For one such
letter, let $r$ be its common offset inside the two windows. Its offsets
in $P$ and $P'$ are therefore $a+r$ and $b+r$. Since both local words
start at $s$, the edit makes the offset in the longer word either equal
to the offset in the shorter word or one larger. Hence
\[
   (a+r)-(b+r)=a-b\in\{0,1\}.
\]

If $a=b$, then $a>0$ would make the compared windows include offset
$L-1$, where $P[L-1]=d$ but $P'[L-1]=P[L]=1-d$.
Thus $a=b=0$, the common initial
vertex. If $a=b+1$ and $b<L-\rho$, put $j=L-\rho-1$, the last offset
of $A$. The compared windows include $P[j+1]$ and $P'[j]$, but these
letters are $d$ and $1-d$, respectively. Hence $b=L-\rho$ and
$a=L-\rho+1$, the common terminal vertex. Only the endpoints are shared,
so the pair is a bubble in either orientation.

\emph{Different edits do not interact.} Each vertex of a bubble occurs
in $x$ or in the word obtained by that edit alone, and all its surviving
source letters belong to the corresponding local source substring.
If different bubbles shared a vertex, Lemma~\ref{lem:rigidity} would
give them a common surviving source letter. Their source substrings are
disjoint, so this is impossible. Thus different bubbles share no vertex,
including their endpoints. Applying the replacement identity to the
disjoint source substrings shows that the full spectrum difference is the sum of the
$2t$ signed edge replacements. Exactly $t$ come from deletions and $t$
from insertions, giving $t$ bubbles in each orientation, so the
difference lies in $\cK_0$.
\end{proof}

\section{Random labels and the conflict graph}\label{sec:conflicts}

This section defines the random linear hash of Section~\ref{sec:idea},
under which a label collision forces a test on the spectrum difference.
It also discards the words involved in exceptional equal-label
conflicts: those with fewer edits or with edits too close to one another
or to an end. In the remaining conflict graph, every edge difference is a
rule, hence a sum of $2t$ disjoint bubbles.

We assign labels by taking a random weighted sum of the spectrum
coordinates modulo one and recording which of $Q$ equal intervals
contains the result. Equal labels force the two sums to be close on the
circle. We call a rule \emph{surviving} if it passes this necessary test.

\begin{definition}[The hash, labels, and surviving rules]\label{def:hash}
Let $Q\ge2$ be an integer. Draw independent uniform numbers
$\xi_g\in[0,1)$ for the $L$-grams $g$. On integer vectors define the
additive \emph{hash}
\[
   H(p)=\sum_g\xi_gp_g\pmod 1\ \in\mathbb T,
   \qquad \mathbb T=\mathbb R/\mathbb Z.
\]
Representing circle values in $[0,1)$, define the \emph{label}
$\eta(p)=\lfloor QH(p)\rfloor\in\{0,\ldots,Q-1\}$.
Write $\|a\|=\min_{m\in\mathbb Z}|a-m|$ for distance to the nearest
integer. The set of surviving rules is
\[
   \cK=\{w\in\cK_0:\ \|H(w)\|<1/Q\}.
\]
\end{definition}

The catalogue is closed under negation, hence so is $\cK$.
Below, we express each remaining equal-label conflict as a surviving
rule. A word $y$
receives the label $\eta(R_L(y))$. Two values in the same
interval differ by less than $1/Q$, so
\begin{equation}
   \eta(p)=\eta(p')\quad\Longrightarrow\quad
   \|H(p'-p)\|<1/Q.
\label{eq:collision-test}
\end{equation}

\begin{lemma}[Independent differences]\label{lem:hash-independent}
If $w_1,\ldots,w_R$ are integer vectors linearly independent over
$\mathbb Q$, then $H(w_1),\ldots,H(w_R)$ are independent and uniform on
$\mathbb T$. In particular, for fixed independent rules,
\[
   \Pr_H[w_1,\ldots,w_R\in\cK]=(2/Q)^R.
\]
For any nonzero integer vector $w$, the probability that
$\|H(w)\|<1/Q$ is $2/Q$.
\end{lemma}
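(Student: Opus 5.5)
The plan is to compute the joint distribution of $(H(w_1),\ldots,H(w_R))$ directly from the independent uniform coordinates $\xi_g$. I will first treat a single nonzero integer vector $w$, and then reduce the independent case to a change of variables on a torus.

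\emph{One vector.} Let $w\ne0$ have integer entries. Choose a coordinate $g_0$ with $m=w_{g_0}\ne0$. Condition on all $\xi_g$ with $g\ne g_0$. Then $H(w)=m\xi_{g_0}+c\pmod1$ with $c$ fixed. As $\xi_{g_0}$ ranges uniformly over $[0,1)$, the value $m\xi_{g_0}\bmod 1$ is uniform on $\mathbb T$, because the map $\theta\mapsto m\theta$ is an $|m|$-to-one covering of the circle that preserves Haar measure. A translate of a uniform variable is uniform, so $H(w)$ is uniform on $\mathbb T$ after integrating out the conditioning. The set $\{a:\|a\|<1/Q\}$ is an arc of length $2/Q$, which is at most $1$ since $Q\ge2$. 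This gives the last claim.

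\emph{Independent vectors.} To prove joint uniformity, I would use Fourier characters on $\mathbb T^R$. A random vector $(H(w_1),\ldots,H(w_R))$ is uniform on $\mathbb T^R$ if and only if
\[
\mathbb E\,e^{2\pi i\sum_j n_jH(w_j)}=0
\]
for every nonzero $(n_1,\ldots,n_R)\in\mathbb Z^R$. By additivity of $H$, the exponent equals $2\pi i\,H(v)$ with $v=\sum_j n_jw_j$. Linear independence over $\mathbb Q$, and hence over $\mathbb Z$, gives $v\ne0$, and $v$ is an integer vector. By the one-vector step, $H(v)$ is uniform, so the character has mean $0$. Weyl's criterion, that is, uniqueness of a measure on the torus with given Fourier coefficients, then yields joint uniformity. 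Joint uniformity on $\mathbb T^R$ is the same as independence with uniform marginals.

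\emph{Conclusion.} For fixed independent rules the events $\|H(w_j)\|<1/Q$ are independent, each of probability $2/Q$, which gives $\Pr[w_1,\ldots,w_R\in\cK]=(2/Q)^R$. The only point needing care is that the Fourier criterion determines the distribution. This is standard, since trigonometric polynomials are dense in $C(\mathbb T^R)$. I expect no real obstacle.
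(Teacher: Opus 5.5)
Your proof is correct, but it takes a different route from the paper. The paper never treats the one-vector case separately. It writes $M$ for the integer matrix with rows $w_j$ and, for an arbitrary $a\in\mathbb R^R$, uses full row rank to solve $Mu=a$ over $\mathbb R$. Shifting each $\xi_g$ by $u_g$ modulo one preserves the law of $\xi$ but shifts $(H(w_1),\ldots,H(w_R))$ by $a$. So that law is invariant under every translation of $\mathbb T^R$, hence uniform, since adding an independent uniform point both preserves it and makes it uniform. The scalar claim is then just the case $R=1$.

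You instead compute the scalar case directly, by conditioning on all coordinates but one. You then use additivity of $H$ to turn each nontrivial Fourier coefficient $\mathbb E\,e^{2\pi i\sum_j n_jH(w_j)}$ into the mean of $e^{2\pi iH(v)}$, for the nonzero integer vector $v=\sum_j n_jw_j$.

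Your route uses the hypothesis exactly as stated, namely that there is no nontrivial integer relation among the $w_j$. It also makes clear that joint uniformity follows from uniformity of every nonzero integer combination. The cost is that it relies on uniqueness of Fourier coefficients of probability measures on $\mathbb T^R$. The paper's route avoids harmonic analysis entirely. Its only prices are passing from rank over $\mathbb Q$ to solvability over $\mathbb R$, which is automatic for integer matrices, and the convolution argument for translation-invariant measures.

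One small point of terminology: the step you call ``Weyl's criterion'' is really Fourier uniqueness on the torus, but you say precisely what you use, so nothing is missing.
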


\begin{proof}
Let $M$ be the integer matrix whose rows are the $w_j$. For any real
vector $a\in\mathbb R^R$, full row rank gives a real vector $u$ with
$Mu=a$. Translating every coefficient $\xi_g$ by $u_g$ modulo one
preserves their joint uniform distribution, and translates $M\xi$
by $a$ modulo one, because $M$ has integer entries. Thus the law of
$(H(w_1),\ldots,H(w_R))$ is invariant under every translation of
$\mathbb T^R$. Adding an independent uniform point of $\mathbb T^R$
therefore preserves this law and also makes it uniform, proving the
claim. The arc $\{a:\|a\|<1/Q\}$ has length $2/Q$; the last
statement is the case $R=1$.
\end{proof}

Lemma~\ref{lem:coverage} does not cover conflicts with fewer edits, or
with edits too close to one another or to an end, so their differences
need not be rules, and the counting of Section~\ref{sec:resolve} does not
apply to them. They are cheap to discard. A partner with fewer edits has
fewer positions to choose, and an edit close to another edit or to an end
has only $O(L)$ positions. So such partners are fewer than general
confusable partners by a factor of about $n/L$.

\begin{lemma}[The nonseparated family is thin]\label{lem:nonsep}
Let $\cE$ be the set of words $x\in\cW_{n,k}$ admitting a distinct
word $z$ of the same length with $\eta(R_L(z))=\eta(R_L(x))$ such that some
alignment from $x$ to $z$ has $d$ deletions and $d$ insertions with
$d<t$, or has $t$ deletions and $t$ insertions but violates the
separation hypothesis of Lemma~\ref{lem:coverage}. There is a constant
$c_t$ depending only on $t$ with
\[
   \Pr_H[x\in\cE]\ \le\ 2c_t\,n^{2t-1}L/Q
   \qquad\text{for every }x\in\cW_{n,k}.
\]
\end{lemma}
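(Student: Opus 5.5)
We fix $x\in\cW_{n,k}$ and bound $\Pr_H[x\in\cE]$ by a union bound over the candidate partners $z$. The first step is to bound the number of distinct words $z\ne x$ that arise from one of the two exceptional kinds of alignment. For each such $z$, the collision test \eqref{eq:collision-test} gives $\|H(R_L(z)-R_L(x))\|<1/Q$. Since $z\ne x$ and $x$ is $(L-1)$-unique, Lemma~\ref{lem:path} gives $R_L(z)\ne R_L(x)$, so the difference is a nonzero integer vector. By Lemma~\ref{lem:hash-independent} each such $z$ therefore collides with probability at most $2/Q$. It then suffices to show that the number of exceptional partners is at most $c_t n^{2t-1}L$.

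\emph{Counting partners with $d<t$.} An alignment with $d$ deletions and $d$ insertions determines $z$ once we specify:
\begin{itemize}
\item the $d$ deleted positions of $x$, at most $n^d$ choices;
\item the $d$ insertion positions in the intermediate word, at most $(n+1)^d$ choices;
\item the inserted letters, $2^d$ choices.
\end{itemize}
Summing over $d\le t-1$ gives $O_t(n^{2t-2})$ partners. For $n\ge L$ this is at most $c_t n^{2t-1}L$, so these partners are absorbed.

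\emph{Counting nonseparated partners with $d=t$.} We describe $z$ by its $2t$ edit columns, listed in alignment order. Each edit is either a deletion or an insertion and is located by the number $m_i$ of matched columns preceding it, with $0\le m_i\le n$. Together with the inserted letters and the rule for ordering edits within a run of consecutive columns, this data determines $z$. The number of choices for the ordering and labelling data is bounded by a constant depending only on $t$.

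Violating separation means one of two things:
\begin{itemize}
\item some consecutive pair satisfies $m_{i+1}-m_i<4L$;
\item $m_1<4L$, or $n-t-m_{2t}<4L$, using that there are exactly $n-t$ matched columns.
\end{itemize}
Here it is enough to consider consecutive pairs. If some pair of edits is within $4L$ matched columns, then some consecutive pair is as well, because the matched columns between consecutive edits add up.

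We first choose which gap is short: one of the $2t-1$ consecutive gaps or one of the two boundaries. That gap then has fewer than $4L$ possible values, and the remaining $2t-1$ free coordinates have at most $n+1$ values each. Counting $m_1$ and the gaps (or $m_{2t}$ via the reverse) this gives at most $(2t+1)\cdot 4L\cdot(n+1)^{2t-1}\cdot O_t(1)=c_t n^{2t-1}L$ partners. We set $c_t$ as the larger of the two constants.

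The main obstacle is making this parametrization rigorous, so that every nonseparated $z$ is actually produced, with different alignments allowed to give the same $z$; overcounting is harmless. It also has to keep the short gap as one genuinely free coordinate with only $O(L)$ values. I would handle this by parametrizing the alignment itself, using the sequence of column types and the inserted bits, rather than $z$. A distinct $z$ has at least one such alignment, and the map from alignment data to $z$ is surjective onto the exceptional partners.

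Summing the collision bound $2/Q$ over at most $c_t n^{2t-1}L$ partners then gives $\Pr_H[x\in\cE]\le 2c_t n^{2t-1}L/Q$.
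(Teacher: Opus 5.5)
Your proposal is correct and follows essentially the paper's own argument. It takes a union bound over the exceptional partners $z$; each collides with probability $2/Q$ because $R_L(z)-R_L(x)\neq0$ by Lemma~\ref{lem:path}. It then counts matched-column anchors, giving $O_t(n^{2t-2})$ partners for $d<t$ and $O_t(Ln^{2t-1})$ nonseparated partners for $d=t$. The difference is cosmetic. You parametrize by consecutive gaps in alignment order, which handles edits sharing an anchor cleanly. The paper instead fixes the offending pair and lets one anchor range over $O(L)$ values near its partner.
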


\begin{proof}
An alignment with $d\ge1$ deletions and $d$ insertions has $n-d$
matched columns. Record for each edit its \emph{matched-column anchor},
the number of matched columns preceding it, and record the inserted
bits in their left-to-right order. Each anchor has at most $n$ choices.
The sorted deletion anchors $a_1,\ldots,a_d$ identify the deleted source
positions as $a_j+j$; the insertion anchors and ordered bits then
determine the target. There are at most $n^{2d}2^d$ such records.
Summing over $1\le d<t$ gives $O_t(n^{2t-2})$ records.

For $d=t$, failure of separation means that two edit anchors differ by
less than $4L$, or that an edit anchor is within $4L$ of a boundary
anchor, $0$ or $n-t$. There are $O_t(1)$ choices of such a pair, at
least one member of which is an edit. Choose all other edit anchors in
at most $n^{2t-1}$ ways; the remaining one has $O(L)$ choices near its
partner. Including the inserted bits and the cases $d<t$ gives at most
$c_tLn^{2t-1}$ exceptional records for some constant $c_t$.

Fix one such record whose target $z$ differs from $x$, and put
$w=R_L(z)-R_L(x)$. This vector is nonzero:
$z$ has length $n$ and $z\ne x$, so $R_L(z)\ne R_L(x)$ by
Lemma~\ref{lem:path}. By \eqref{eq:collision-test}, equal labels
require $\|H(w)\|<1/Q$, an event of probability $2/Q$ by
Lemma~\ref{lem:hash-independent}. The union bound over records proves
the claim.
\end{proof}

We work with the pairs that the label has not separated. The graph below
will be bipartite after we discard its components containing odd cycles.
The later counting argument bounds the number of words lost this way.
Fix the hash $H$ and the surviving rules $\cK$ from
Definition~\ref{def:hash}.

\begin{definition}[The conflict graph]\label{def:conflictgraph}
The \emph{conflict graph} $\Gamma_H$ has vertex set
$\cW_{n,k}\setminus\cE$, where $\cE$ is the nonseparated family of
Lemma~\ref{lem:nonsep}. Two distinct vertices are joined when they are
confusable at radius $t$ and have the same label.
\end{definition}

Every edge from $y$ to $z$ has a surviving rule as its spectrum
difference: $R_L(z)-R_L(y)\in\cK$. Indeed, Lemma~\ref{lem:confuse} gives an
alignment with $d\le t$ deletions and $d$ insertions. Since $y\notin\cE$
and the labels agree, this alignment has $d=t$ and is separated.
Lemma~\ref{lem:coverage} therefore puts the difference in $\cK_0$, and
the equal labels put it in $\cK$ by \eqref{eq:collision-test}.
Both endpoint spectra are Boolean
by Lemma~\ref{lem:path}.

\section{Bounded witnesses}\label{sec:witness}

This section carries out the central step of Section~\ref{sec:idea}: an
odd closed walk forces a linear relation among its differences, and the
relation forces their local changes to overlap. The walk gives a
collection of surviving rules, but its length can be arbitrarily large.
We will retain a small collection whose overlaps make it economical to
describe and whose independence makes it unlikely to survive. The
description count bounds how many such collections there are, and the
union bound of Section~\ref{sec:main} multiplies that count by the
survival probability. We first explain this counting goal, then the
information that the retained rules must preserve.

\subsection{The counting goal and generating sequences}

Fix a word $x\in\cW_{n,k}$ and write $v=R_L(x)$. Recall that a rule
records $2t$ local edits as disjoint bubbles. Draw all the bubbles of
a collection of $R$ rules in the de Bruijn graph. Bubbles that touch
join into connected components; write $c$ for their number. If no
bubbles from different rules touch, $c=2tR$. Overlaps
can reduce this number.

The counting argument in Section~\ref{sec:resolve} will show that
collections with the property defined below can be described using
just one position in $x$ per connected component formed by their
bubbles. All other choices have only
polynomially many possibilities in $L$ and $R$ per rule. The resulting
count is at most $n^c(LR)^{O_t(R)}$.
If the $R$ rules are independent over $\mathbb Q$, the probability that
all survive is $(2/Q)^R$ by Lemma~\ref{lem:hash-independent}.
We will choose $Q=n^{2t-1}L^{O_t(1)}$. Looking just at the powers of $n$
in the count and survival probability, their product is
\[
   n^c\,n^{-(2t-1)R}=n^{\,c-(2t-1)R}.
\]
This explains the target: find independent rules with enough overlap
to bring $c$ down to about $(2t-1)R$. We account for the remaining
factors when stating and counting the witnesses.

Independence controls survival. To obtain the description count, we
also need an order in which every negative gram of a rule occurs either
in $x$ or in an earlier rule. Section~\ref{sec:resolve} recovers the
rules in this order, reading the letters of each negative gram either
from $x$ or from a rule already recovered.

Here is why a walk in the conflict graph supplies this property. Scan
the walk from $x$. Retain each difference that is linearly independent
of those already retained; otherwise leave the list unchanged. All
linear spans and combinations in this section are over $\mathbb Q$.
If the retained rules so far are $g_1,\ldots,g_d$, the
current spectrum $p$ has the form
\[
   p=v+\sum_{i=1}^d a_i g_i.
\]
Suppose the next step removes a gram $h$ that is absent from $x$.
Then $p_h=1$ and $v_h=0$, so some earlier retained rule has a nonzero
$h$-coordinate. Thus $h$ occurs on one of that rule's paths.
Either sign is allowed: we need the gram's letters, not a claim that
an earlier retained rule introduced it. The retained list need not
itself describe successive words along a walk.

\begin{definition}[Generating sequences]\label{def:generating}
A \emph{generating sequence at $v=R_L(x)$} is a sequence
$G=(g_1,\ldots,g_d)$ of rules linearly independent over $\mathbb Q$
such that, writing $N(g_j)=\{h:(g_j)_h=-1\}$ for the negative grams
of $g_j$,
\[
   N(g_j)\subseteq\supp(v)\cup\bigcup_{i<j}\supp(g_i)
   \qquad(1\le j\le d).
\]
Thus every negative gram occurs in $x$ or in an earlier rule of the
sequence.
\end{definition}

If a negative gram of $g_j$ is absent from $x$, an earlier rule whose
support contains it is a \emph{supplier} of $g_j$. We call the underlying
set of a generating sequence a \emph{generating set}; which rules survive,
and the number $c$ of connected components, depend only on this set.

We now make precise the connected components formed by the bubbles in
the counting goal.
They are formed from all edges appearing in the rules, regardless of
their signs; edges are not cancelled between different rules.

\begin{definition}[Connected components of $S(U)$]\label{def:adjacency}
For a set $U$ of rules, put $S(U)=\bigcup_{g\in U}\supp(g)$.
Take the de Bruijn subgraph consisting of these edges and their
incident vertices, and ignore edge directions. Write $c(U)$ for the
number of connected components of this subgraph.
\end{definition}

For example, when $t=2$, each rule has four disjoint bubbles. Suppose two
rules have exactly one pair of bubbles, one from each rule, that share a
vertex. For the set $U$ of these two rules, $S(U)$ has seven connected
components: the two touching bubbles form one, and the other six remain
separate. For a
generating sequence with this shape, the positional factor in the count
is $n^7$.

\subsection{Bounds on \texorpdfstring{$c(U)$}{c(U)}}

We now show how overlaps and a linear relation reduce the number of
connected components of $S(U)$. We will also extract smaller generating sequences
by keeping groups of whole rules. The following grouping ensures that
an earlier supplier is retained along with any rule that needs it.

\begin{definition}[Rule blocks]\label{def:blocks}
Let $U$ be a set of rules. Two rules of $U$ are \emph{adjacent} if a bubble of one shares a de Bruijn
vertex with a bubble of the other. Group rules together when they are
linked by a chain of such adjacencies; these groups are the
\emph{blocks} of $U$. Distinct blocks have vertex-disjoint supports, so
$c(U)$ is the sum of $c(B)$ over the blocks $B$.
\end{definition}

The two rules in the preceding example form one block, although $S(U)$
has seven connected components. Connected components of $S(U)$ group
de Bruijn edges; blocks group whole rules, each of which contains $2t$
disjoint bubbles. Figure~\ref{fig:blocks} shows the distinction for the
$t=2$ example above.

\begin{figure}[htbp]
\centering
\begin{tikzpicture}[x=1cm,y=1cm,every node/.style={font=\small}]
  \node[left] at (-0.45,1.35) {$g_1$};
  \node[left] at (-0.45,-1.35) {$g_2$};

  \draw[dashed,rounded corners] (-0.35,-2.15) rectangle (7.75,2.05);
  \node[anchor=south east] at (7.7,2.05) {one rule block};

  \foreach \x/\lab in {0/1,1.65/2,3.30/3} {
    \draw (\x,1.35) .. controls +(0.35,0.45) and +(-0.35,0.45) .. +(1.05,0);
    \draw (\x,1.35) .. controls +(0.35,-0.45) and +(-0.35,-0.45) .. +(1.05,0);
    \node at (\x+0.525,0.72) {\scriptsize \lab};
  }

  \foreach \x/\lab in {0/5,1.65/6,3.30/7} {
    \draw (\x,-1.35) .. controls +(0.35,0.45) and +(-0.35,0.45) .. +(1.05,0);
    \draw (\x,-1.35) .. controls +(0.35,-0.45) and +(-0.35,-0.45) .. +(1.05,0);
    \node at (\x+0.525,-1.98) {\scriptsize \lab};
  }

  \coordinate (u) at (5.05,1.35);
  \coordinate (v) at (6.05,0);
  \coordinate (w) at (7.05,-1.35);
  \draw (u) .. controls +(0.45,0.20) and +(-0.20,0.50) .. (v);
  \draw (u) .. controls +(0.15,-0.55) and +(-0.55,0.15) .. (v);
  \draw (v) .. controls +(0.55,-0.15) and +(-0.15,0.55) .. (w);
  \draw (v) .. controls +(0.20,-0.50) and +(-0.45,-0.20) .. (w);
  \fill (v) circle (1.2pt);
  \node[right] at (6.10,0.08) {$v$};
  \node at (6.05,-0.55) {\scriptsize 4};
\end{tikzpicture}
\caption{Schematic example for $t=2$. Each row is a rule with four disjoint
bubbles. One bubble of $g_1$ and one of $g_2$ share the vertex $v$, so the two
rules are adjacent and form one block. The numbers label the seven connected
components of $S(\{g_1,g_2\})$: the touching pair gives component $4$, while
the other six bubbles remain separate.}
\label{fig:blocks}
\end{figure}

In this subsection, let $G$ be a generating sequence at $v=R_L(x)$,
with each subset of $G$ given its inherited sequence order.

Every union of whole blocks of $G$ is itself generating. Indeed, every
negative gram absent from $x$ occurs in the support of an earlier
rule of $G$. Sharing that gram makes the two rules adjacent, so the
supplier lies in the same block and remains in the subsequence.

\begin{lemma}[Connected blocks]\label{lem:block}
Adjoining a new rule adjacent to a set $U$ increases $c(U)$ by at most
$2t-1$. Consequently a set $B$ connected by rule adjacencies satisfies
\[
   c(B)\le(2t-1)|B|+1.
\]
\end{lemma}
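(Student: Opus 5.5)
The plan is to prove the first sentence directly from the structure of a single rule, and then obtain the bound on $c(B)$ by induction along an ordering of $B$ in which each rule is adjacent to an earlier one.

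\emph{The first sentence.} Let $g$ be a rule adjacent to $U$. By Definition~\ref{def:catalogue}, $\supp(g)$ is the union of $2t$ bubbles that share no vertex. Each bubble is connected once edge directions are ignored, since its two paths share their endpoints. Passing from $S(U)$ to $S(U\cup\{g\})$ adds the edges of $g$ together with their incident vertices. Edges of $g$ that already lie in $S(U)$ change nothing, so it is enough to track what the $2t$ bubbles contribute. Each bubble either shares a vertex with $S(U)$, or it does not.

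If a bubble shares a vertex with $S(U)$, it adds no new component. It can only merge existing components, together with any other new bubbles it touches. If a bubble is vertex-disjoint from $S(U)$, it forms at most one new component, because it is connected and disjoint from the other bubbles of $g$. Adjacency of $g$ to $U$ means that at least one bubble of $g$ shares a vertex with a bubble of some rule of $U$, and hence with $S(U)$. So at most $2t-1$ bubbles can create new components, and $c(U\cup\{g\})\le c(U)+2t-1$. The key point is that merges only decrease the count, so the bound needs no information about which components get glued together.

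\emph{The bound on $c(B)$.} Since $B$ is connected under rule adjacency, there is an ordering $h_1,\ldots,h_m$ of $B$ in which each $h_j$ with $j\ge2$ is adjacent to some earlier $h_i$. One way to get it is a breadth-first search of the adjacency graph on $B$. For the base case, a single rule has $c(\{h_1\})=2t$, because its bubbles are vertex-disjoint and each is connected. Applying the first sentence $m-1$ times, with $U=\{h_1,\ldots,h_{j-1}\}$ at step $j$, gives
\[
c(B)\le 2t+(2t-1)(m-1)=(2t-1)m+1 .
\]

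\emph{Potential obstacles.} The only subtlety I expect is making precise that vertex-sharing between bubbles of different rules is the same as sharing a vertex with $S(U)$. This holds because $S(U)$ is exactly the union of the bubbles of the rules in $U$, and its vertices are the vertices incident to those edges. I also need adjacency in the sense of Definition~\ref{def:blocks} to supply a touching bubble. It does, because that definition is stated in terms of bubbles sharing a de Bruijn vertex. Beyond these points I expect no real obstacle; the argument is routine bookkeeping on connected components.
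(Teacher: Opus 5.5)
Your proposal is correct and follows the paper's proof: at least one of the new rule's $2t$ bubbles meets an existing component of $S(U)$, so at most $2t-1$ new components appear. Adjoining the rules of $B$ one at a time along a spanning tree of the adjacencies, starting from a single rule with $c=2t$, then gives $c(B)\le(2t-1)|B|+1$.
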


\begin{proof}
One of the new rule's $2t$ bubbles joins an existing connected component
of $S(U)$. For such a set $B$,
start with one rule and adjoin the others along a spanning tree of these
adjacencies.
\end{proof}

A nontrivial linear relation gives a further saving. A bubble isolated
from all the other participating bubbles cannot cancel; the rule on
the other side of the relation must account for it.

\begin{lemma}[Savings from a relation]\label{lem:saving}
Let $I\subseteq G$ with $|I|\ge2$. If a rule has the
representation $w=\sum_{g_i\in I}\alpha_i g_i$ with every $\alpha_i\ne0$,
then
\[
   c(I)\le t|I|+t.
\]
\end{lemma}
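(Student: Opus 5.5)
Consider the $2t|I|$ bubbles of the rules in $I$, and call a bubble \emph{lone} if it shares no de Bruijn vertex with any bubble of another rule of $I$. Bubbles of the same rule are vertex-disjoint by Definition~\ref{def:catalogue}. A lone bubble $b$ of $g_i$ is therefore itself a connected component of $S(I)$, and every bubble that is not lone lies in a component containing bubbles of at least two rules. The plan is to show that there are at most $2t$ lone bubbles, and then count components.

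The key step is that lone bubbles survive into $w$. Let $b$ be a lone bubble of $g_i$. Its edges lie in $\supp(g_i)$, and since $b$ shares no vertex with any bubble of $g_j$ for $j\ne i$, it shares no edge with $\supp(g_j)$ either. Hence on each edge $e$ of $b$ we get $w_e=\alpha_i(g_i)_e=\pm\alpha_i\ne0$. So $\supp(w)$ contains all of $b$, and for $e$ outside $S(I)$ we have $w_e=0$.

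Next we show that the lone bubbles are covered by distinct bubbles of $w$. Since $w$ is a rule, $\supp(w)$ is the disjoint union of $2t$ bubbles, which are its connected components. Each lone bubble $b$ is a connected set of edges inside $\supp(w)$, so it lies inside one bubble $b'$ of $w$. Moreover $\supp(w)\subseteq S(I)$, and $b$ is a whole component of $S(I)$, so the component of $\supp(w)$ containing $b$ lies in $b$; thus $b'=b$. Distinct lone bubbles are vertex-disjoint, so they give distinct bubbles of $w$. Hence there are at most $2t$ lone bubbles. This identification is the step I expect to need care: it uses that components of a rule's support are exactly its bubbles, and that $\supp(w)\subseteq S(I)$.

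Finally we count. Let $\ell\le 2t$ be the number of lone bubbles. The remaining $2t|I|-\ell$ bubbles lie in components that each contain at least two bubbles, so there are at most $(2t|I|-\ell)/2$ such components. Therefore
\[
   c(I)\le \ell+\frac{2t|I|-\ell}{2}=t|I|+\frac{\ell}{2}\le t|I|+t .
\]
The hypothesis $|I|\ge2$ is not needed for this inequality itself. It matters only later, when combined with Lemma~\ref{lem:block}, and ensures that lone bubbles can be defined relative to other rules.
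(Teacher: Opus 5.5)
Your proof is correct and follows the paper's argument. The paper also shows that a component of $S(I)$ with a single bubble occurrence lies in $\supp(w)$ and meets exactly one bubble of $w$, which bounds the number of such components by $2t$, and then counts the remaining components in pairs exactly as you do. Your identification $b'=b$ is slightly stronger than the paper's ``meets'' step but is not needed. You are right that $|I|\ge2$ plays no role in this inequality; it is used only in Lemma~\ref{lem:witness}.
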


\begin{proof}
Count the $2t|I|$ bubble occurrences in the rules of $I$, counting a
bubble separately each time it occurs in a rule. Each occurrence lies
in one connected component of $S(I)$. If a connected component of
$S(I)$ contains just one occurrence
$\beta$, belonging to $g_i$, then on every edge $e$ of $\beta$ the
relation gives $w_e=\alpha_i\beta_e\ne0$. Thus some bubble of $w$ meets
this connected component of $S(I)$. The bubbles of $w$ have disjoint supports, so each has
connected support contained in $S(I)$ and cannot meet two connected
components of $S(I)$. There are therefore at most $2t$ connected
components of $S(I)$ containing just one bubble occurrence.

If $a\le2t$ is their number, every remaining connected component of
$S(I)$ contains at least two occurrences, and hence
\[
   c(I)\le a+\frac{2t|I|-a}{2}\le t|I|+t.\qedhere
\]
\end{proof}

\subsection{Extracting a witness}

The counting goal asks for $c\le(2t-1)R$. We can afford one extra
connected component formed by the bubbles when $R>L$: its additional
position costs a factor
$n\le2^R$, since $L\ge\log_2 n$. This is only a constant factor per
rule. The witness below also has $R=O_t(L^2)$, so the remaining
factors in the count and survival probability are $L^{O_t(R)}$.
A sufficiently large power of $L$ in $Q$ absorbs these factors.
These observations explain the cutoff $L$ and the bounds in the next
lemma.

\begin{lemma}[Bounded witness]\label{lem:witness}
If the component of $x\in\cW_{n,k}\setminus\cE$ in $\Gamma_H$ is not
bipartite, there is a generating sequence $U\subseteq\cK$ at $R_L(x)$
of $R$ surviving rules such that
\[
   2\le R\le1+4tL^2,\qquad c(U)\le(2t-1)R+1.
\]
When $R\le L$, the stronger bound $c(U)\le(2t-1)R$ holds.
\end{lemma}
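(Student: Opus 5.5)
The plan is to start from an odd closed walk through $x$, extract a short linear relation among the retained edge differences, and then enlarge the rules involved only by suppliers, each of which is adjacent to what has already been chosen.

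\emph{Step 1: a relation that is not a repeat.} Since the component of $x$ in $\Gamma_H$ is not bipartite, it contains an odd cycle. Walking from $x$ to that cycle, around it and back gives an odd closed walk $x=y_0,y_1,\ldots,y_m=x$. Each step difference $w_s=R_L(y_s)-R_L(y_{s-1})$ lies in $\cK$, by the remark after Definition~\ref{def:conflictgraph}. I would scan the walk and retain every difference that is independent of those already retained. By the discussion before Definition~\ref{def:generating}, the retained list $G=(g_1,\ldots,g_d)$ is a generating sequence at $v=R_L(x)$ consisting of surviving rules.

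Suppose every step were either newly retained or equal to $\pm g_i$. Expand $\sum_s w_s=0$ in the basis $G$. Then each $g_i$ occurs with sign $+$ exactly as often as with sign $-$, so $m$ is even, a contradiction. Hence some step $w$ is a combination $w=\sum_{i\in I}\alpha_ig_i$ with all $\alpha_i\ne0$ and $|I|\ge2$; the case $|I|=1$ is excluded because a rule is never a nontrivial multiple of another, its entries lying in $\{-1,0,1\}$. Lemma~\ref{lem:saving} then gives $c(I)\le t|I|+t$.

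\emph{Step 2: closing under suppliers.} The set $I$ need not be generating, so I would enlarge it. Process the rules in sequence order; for each negative gram absent from $x$, add one earlier supplier, chosen from $G$. Each added supplier shares a gram with a rule already in the set, so it is adjacent to the set, and by Lemma~\ref{lem:block} it raises $c$ by at most $2t-1$. With $U$ the resulting closure and $R=|U|$,
\[
c(U)\le t|I|+t+(2t-1)(R-|I|)\le(2t-1)R-\bigl((t-1)|I|-t\bigr)\le(2t-1)R,
\]
using $|I|\ge2$ and $t\ge2$. So whenever the closure is small this already gives the stronger bound; the weaker bound $c\le(2t-1)R+1$ is needed only for the fallback in Step 3.

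\emph{Step 3: bounding $R$ (main obstacle).} Neither $|I|$ nor the supplier closure is a priori bounded, and this is the step I expect to be hardest. First I would bound the number of suppliers per rule. A rule has $2t$ bubbles, and each has at most $L+k+1\le2L$ negative grams, so each rule needs at most about $4tL$ suppliers. If $I$ together with its closure has at most $L$ rules, Step 2 finishes the proof.

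Otherwise I would abandon the relation and build a witness from adjacency alone. Start from a single rule of $G$ and grow its supplier closure breadth-first in sequence order. Stop as soon as the size exceeds $L$, but first finish closing every rule already admitted. Since each admitted rule contributes at most $4tL$ suppliers and at most $L$ rules are admitted before we stop, the result has $L<R\le1+4tL^2$ rules. It is generating, connected by adjacencies, and independent as a subset of $G$. Lemma~\ref{lem:block} gives $c(U)\le(2t-1)R+1$, which the statement allows since here $R>L$.

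The delicate points are these. Closing the already admitted rules must not itself require unboundedly many rules; I would handle this by admitting suppliers only when they are needed and counting each admitted rule's needs once. The lower bound $R\ge2$ must also hold in the fallback; it does, because the closure exceeds $L\ge2$ rules.
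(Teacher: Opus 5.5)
Your Steps 1 and 2 are sound. In Step 2, close under suppliers in decreasing index order, or simply repeat until nothing changes. They give $c(U)\le(2t-1)R$ for the supplier closure $U$ of $I$. The gap is Step 3: nothing bounds $R$, and your fallback does not supply the bound.

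A supplier closure cut off once it passes $L$ rules is not generating. The suppliers admitted while you ``finish closing'' need suppliers of their own, and supplier chains $g_{j_1}\leftarrow g_{j_2}\leftarrow\cdots$ can be arbitrarily long. Counting $4tL$ suppliers per admitted rule bounds the breadth of the closure, not this depth. Starting from the earliest rule whose closure exceeds $L$ would repair this point, since each of its suppliers then has a closure of at most $L$ rules.

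The more serious problem is that the fallback may have nothing to start from. Nothing in your argument prevents every rule of $G$ from having all its negative grams in $x$. Then every supplier closure is a singleton, while $I$, and hence your $U=I$, can still be arbitrarily large. The rules of $I$ can be tied together by shared vertices that involve no supplier relation, so supplier closures are the wrong unit for bounding $R$.

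The missing idea is to bound rule-adjacency blocks during the scan. Any union of whole blocks is generating, since a supplier shares a gram with the rule it supplies and so lies in its block. The paper stops appending as soon as some block of the retained list has more than $L$ rules. Two cases then arise.

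\emph{A relation appears first.} Write $w=\sum_{g_i\in I}\alpha_ig_i$. For each block $B$ meeting $I$, the restriction $w|_{S(B)}=\sum_{g_i\in I\cap B}\alpha_ig_i$ is nonzero by independence. Each bubble of $w$ is connected and so lies in one block's support, so $S(B)$ contains its own bubble of $w$. Hence at most $2t$ blocks meet $I$. Their union $U$ is generating with $2\le R\le 2tL$, and your Step~2 computation gives $c(U)\le(2t-1)R$.

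\emph{A block first exceeds $L$ when $g$ is appended.} That block consists of $g$ and the previous blocks adjacent to $g$. Each such block touches $g$ at its own vertex among the at most $2t(2L-1)$ vertices of $g$'s bubbles, and each has at most $L$ rules. So $L<R\le1+4tL^2$, and Lemma~\ref{lem:block} gives $c(U)\le(2t-1)R+1$.

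You can also apply this after your full scan. Pass to the first prefix of $G$, up to the time $w$ is read, in which some block exceeds $L$ rules, and take the block containing its last rule. If there is no such prefix, take the union of the blocks meeting $I$.
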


A \emph{witness} is a generating sequence $U$ satisfying these bounds
on its size $R$ and on $c(U)$. It need not be connected by rule
adjacencies. It \emph{survives} when all
its rules belong to $\cK$.

\begin{proof}
Take an odd closed walk in $\Gamma_H$ based at $x$: travel to an odd
cycle in its component, traverse it, and return along the outward
path. Read its signed spectrum differences, all in $\cK$, maintaining
an independent list $G$ initially empty. Skip any difference equal to
$g$ or $-g$ for a rule $g$ already in $G$; write
$\pm G=\{g,-g:g\in G\}$. If any other difference lies in $\spn(G)$,
stop: a new linear relation has appeared. Otherwise append it, stopping
if a block now has more than $L$ rules.

Just before an append, all previously read differences belong to
$\spn(G)$, so the current spectrum $p$ satisfies $p-v\in\spn(G)$.
The new difference $g$ has $N(g)\subseteq\supp(p)$ because the two
endpoint spectra are Boolean. Hence
\[
   N(g)\subseteq\supp(p)\subseteq\supp(v)\cup S(G),
\]
which makes the append generating. Thus every list obtained is generating.

One of the two events must occur. Otherwise the whole odd closed walk
uses differences in $\pm G$. Their sum is zero, so independence forces
equally many positive and negative occurrences of each member of $G$,
making the length even.

\emph{A new linear relation.} Every block of the current sequence $G$ has at
most $L$ rules. The new rule $w$ lies in $\spn(G)$ and is not
$\pm g$ for any $g\in G$, so write
\[
   w=\sum_{g_i\in I}\alpha_i g_i,
   \qquad \alpha_i\ne0,\quad |I|\ge2.
\]
Indeed, a relation with one nonzero coefficient would give $w=\pm g_i$,
since the nonzero entries of $w$ and $g_i$ lie in $\{-1,1\}$.
We cannot simply retain $I$, since $I$ need not be generating: a
supplier of one of its rules may lie outside it. Instead, let
$\mathcal B$ be the blocks meeting $I$ and take
$U=\bigcup_{B\in\mathcal B}B$. Keeping whole blocks retains every
needed supplier, so $U$ is generating.

The supports of distinct blocks are vertex-disjoint. Since
$w=\sum_{g_i\in I}\alpha_i g_i$, we have
$\supp(w)\subseteq S(I)\subseteq S(G)$. The $2t$ bubbles of $w$ have
disjoint vertex sets (including endpoints), so their supports partition
$\supp(w)$. Each bubble is connected and hence lies wholly in the support
of one block. For each $B\in\mathcal B$, rules outside $B$ contribute
nothing on $S(B)$, so the restriction of $w$ to the coordinates in $S(B)$
is
\[
   w|_{S(B)}=\sum_{g_i\in I\cap B}\alpha_i g_i,
\]
which is nonzero by independence of $G$. Thus $\supp(w)$ meets $S(B)$,
which therefore contains a bubble of $w$. Since each bubble lies in one
block, distinct blocks contain distinct bubbles of $w$. Consequently
$|\mathcal B|\le2t$ and
\[
   2\le |I|\le R=|U|\le2tL.
\]
Starting from $I$, add the remaining rules of each $B\in\mathcal B$
along rule-adjacency paths. Lemmas~\ref{lem:saving} and~\ref{lem:block}
give
\[
\begin{split}
   c(U)
   &\le c(I)+(2t-1)(R-|I|)\\
   &\le(2t-1)R-(t-1)|I|+t
    \le(2t-1)R.
\end{split}
\]
The last inequality uses $|I|\ge2$ and $t\ge2$.

\emph{A large block.} Here the construction stopped because a block
acquired more than $L$ rules. Let $g$ be the rule just appended. Before
this adjunction every block had at most $L$ rules; call these the
previous blocks. Take $U$ to be the new block containing $g$. It
consists of $g$ and the previous blocks adjacent to $g$, and it is the
block that exceeded $L$ rules, since appending $g$ left every other
block unchanged. As a whole block of $G$ it is generating, and it is
connected by rule adjacencies. The previous blocks have vertex-disjoint
supports, so each previous block adjacent to $g$ shares its own vertex
with a bubble of $g$. These shared vertices are distinct for distinct
previous blocks; otherwise the two blocks would have intersecting
supports and would already be one block. A bubble has at most $2L-1$
vertices, since its two paths have $L-\rho+1$ and $L-\rho$ edges and
share their endpoints. Hence at most
$2t(2L-1)$ previous blocks, each with at most $L$ rules, are adjacent to
$g$, and
\[
   L<R=|U|\le1+4tL^2.
\]
Lemma~\ref{lem:block} gives $c(U)\le(2t-1)R+1$, as required.
\end{proof}

\section{Counting generating sets}\label{sec:resolve}

This section proves the count behind Section~\ref{sec:idea}: a
generating sequence needs only one position in $x$ for each connected
component formed by its bubbles, so overlapping local changes save
positions. We use the defining property of a generating sequence to
count its rules. Start
each connected component formed by the bubbles by
recording in $x$ the position of one bubble's negative path. An overlapping bubble shares a known stretch of at least $k$
letters with what has already been recovered. To extend this stretch,
we use windows on its negative path. Each window occurs either in $x$,
where $k$-uniqueness determines its position, or in an earlier rule.
In the latter case we recover the supplying bubble first. Recursion
terminates because the rule indices decrease.

The proof below makes this recovery precise. Its key accounting point
is that each connected component formed by the bubbles needs only its
initial position in $x$; all
later choices name bubbles or short offsets, with only polynomially
many possibilities in $L$ and $R$.

\begin{lemma}[Counting generating sets]\label{lem:resolve}
There is an integer $a_t\ge1$, depending only on $t$, such that for
every $k$-unique word $x\in\cW_{n,k}$ and integers $R\ge1$ and
$0\le c\le2tR$, the number
of underlying sets of generating sequences $G$ at $R_L(x)$ with $R$
rules and $c(G)=c$ is at most
\[
   n^c(LR)^{a_tR}.
\]
\end{lemma}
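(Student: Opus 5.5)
We will bound the number of underlying sets by an injective encoding. Each generating sequence $G=(g_1,\dots,g_R)$ with $c(G)=c$ is mapped to a \emph{description} from which $G$ can be reconstructed given $x$. The description has two parts. The first is a list of $c$ positions in $x$, each with $n$ choices. The second is a string of ``local'' choices, each ranging over a set of size polynomial in $L$ and $R$, with $O_t(1)$ such choices per bubble and hence $O_t(R)$ in total. Any fixed ordering of $G$ suffices, since distinct sets have distinct sequences. Because the description determines the sequence, the number of sets is at most $n^c(LR)^{a_tR}$.

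A bubble, and then a rule, is determined by local data once we know its negative path in the de Bruijn graph. A bubble is fixed by its orientation, the bit $d$, the run length $\rho\le k+1<L$, and the words $A,B$, which together spell the longer local word $P$. The negative path is $P$ in the deletion orientation and $P'$ in the insertion orientation. Given the negative path as a string, together with $d$, $\rho$ and the orientation, the positive path is forced. So it suffices to recover, for each of the $2tR$ bubble occurrences, the string spelled by its negative path, plus $O(1)$ bits and an offset of size at most $L$.

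We process the bubbles in the order $j=1,\dots,R$, bubble by bubble within each rule. For each bubble $\beta$ of $g_j$ we recover its negative path window by window, using $N(g_j)\subseteq\supp(v)\cup\bigcup_{i<j}\supp(g_i)$. Each negative gram $h$ either occurs in $x$ or lies in the support of some earlier bubble. We record which case holds. In the second case we record the index $(i,\text{bubble})$ of a supplier, with $2tR$ choices, and the offset of $h$ on that supplier's paths, with $O(L)$ choices. Since the supplier's rule has index $i<j$, it has already been recovered, so $h$ is known.

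In the first case, one of two things happens. If some gram of $\beta$ already known (from a supplier, or from a previous window of $\beta$) overlaps $h$ in at least $k$ letters inside $x$, then $k$-uniqueness of $x$ pins down the position of $h$ from a known length-$k$ substring. We then need only an offset in $O(L)$. Otherwise we record a fresh position in $x$ ($n$ choices). Consecutive grams of a path overlap in $L-1\ge k$ letters, so within one negative path a fresh position is needed at most once. The information recorded per bubble is therefore $O(L)$ windows' worth of choices. To keep this bounded by $(LR)^{O_t(1)}$ per bubble, we record only one gram per maximal stretch: for each switch between ``in $x$'' and ``from supplier'' along the path we pay for a new index and offset. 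By Lemma~\ref{lem:rigidity}-type rigidity there are $O(1)$ such switches per path, since a path that meets $x$ differs from $x$ only near one edit.

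The main obstacle is charging fresh positions in $x$ so that at most $c$ are used, one per connected component of $S(G)$. The plan is to show that a bubble touching an already-processed component never needs a fresh position. If $\beta$ shares a vertex (an $(L-1)$-letter string) with a recovered bubble, that shared vertex is a known string of length $L-1\ge k$. It is located either in $x$ by $k$-uniqueness, or on a recovered path by an offset, and from it the neighbouring grams of $\beta$'s negative path can be propagated in both directions by the case analysis above. The delicate point is the processing order. A component may first be entered through a bubble of large index whose neighbours in the component belong to later rules. We will handle this by ordering bubbles, for the purpose of charging, by a breadth-first traversal of each component, starting at its earliest bubble. Suppliers have smaller rule index and are recovered recursively first. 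This recursion terminates because rule indices strictly decrease along supplier chains. Each component then pays exactly one fresh position, namely for its first bubble when that bubble's negative path has a gram in $x$. If instead that first bubble's grams all come from suppliers, those suppliers lie in the same component, contradicting the choice of the earliest bubble. This gives at most $n^c$ positional choices and $(LR)^{O_t(1)}$ local choices per bubble, hence the bound with a suitable $a_t$.
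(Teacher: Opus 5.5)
Your architecture matches the paper's: an injective description with one root position in $x$ per connected component of $S(G)$, taken at a bubble of the earliest rule in that component, whose negative path must then lie wholly in $x$. The rest is reached through shared vertices, and suppliers not yet recovered are recovered recursively, which terminates because their rule indices strictly decrease.

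The gap is in the per-bubble cost. Recording a source for every gram of a negative path costs $(LR)^{O(L)}$ per bubble. Your patch, that there are only $O(1)$ switches between ``in $x$'' and ``from a supplier'' by Lemma~\ref{lem:rigidity}, is not justified, for two reasons.
\begin{itemize}
\item Lemma~\ref{lem:rigidity} concerns words within one edit of $x$. The negative path of a later rule may be the positive path of an earlier rule, whose own negative path may be an even earlier positive path, so it can be several edits away from $x$.
\item Even if the in-$x$/not-in-$x$ switches were few, consecutive grams can pass from one bubble's path to another's at a shared vertex. Nothing in your argument prevents a maximal non-$x$ stretch from drawing its grams from several different supplier bubbles, so one supplier index and offset per stretch does not determine the stretch.
\end{itemize}

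The missing idea is that you do not need a source for every gram. The negative path spells a word of length $L+D$ with $D\le L-1$. The paper selects only three $L$-gram windows on it, at offsets $0$, $\lfloor D/2\rfloor$ and $D$. Because $L=3(k+1)$, consecutive windows overlap in more than $k$ letters, so these three windows cover the path. Each can be filled from a known $k$-letter stretch inside it in one of two ways:
\begin{itemize}
\item by a unique lookup in $x$;
\item by naming an earlier bubble and an offset. If that supplier is not yet recovered, it is started from the known stretch at the corresponding offset on its path and recovered first.
\end{itemize}
This gives $O(1)$ instructions per bubble and hence $(LR)^{O_t(R)}$ overall.

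You also need a transfer step between a bubble's two paths, which you do not state. The shared vertex through which a bubble is entered, or the window through which a supplier is started, may lie on the positive path. The header, plus at most one extra bit, converts a known $k$-letter stretch there into one on the negative path.
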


\begin{proof}
\emph{Starting each connected component of $S(G)$.} Put $\nu=2tR$. Index the bubble occurrences by
$(i,j)$, where $i$ is the rule's position in the sequence and
$1\le j\le2t$; identical bubbles in different rules retain separate
indices. Each connected component of the de Bruijn subgraph with edge set
$\bigcup_{g\in G}\supp(g)$ contains a bubble whose negative path is a
substring of $x$: choose one whose rule is earliest in the sequence.
A negative gram absent from $x$ would occur on a path of an earlier
rule in the same connected component of $S(G)$, a contradiction. All its negative grams
therefore occur in $x$, so its negative path is a subpath of the simple
path $R_L(x)$ from Lemma~\ref{lem:path}.

\emph{Windows and path transfer.} We need two elementary recovery facts:
a few overlapping windows cover a negative path, and a known stretch
can be transferred between the two paths of a bubble. To record their
shapes, we give each bubble's orientation, run length $\rho$, and edited
bit $d$; call this information its \emph{header}. Its negative path
spells a word of length $L+D$, where $D=L-\rho$ or $D=L-\rho-1$. Select its first, middle, and
last $L$-gram windows, at the distinct offsets
\[
   0,\quad \lfloor D/2\rfloor,\quad D.
\]
Successive offsets differ by at most
\[
   \lceil D/2\rceil\le\lceil(L-1)/2\rceil<L-k,
\]
since $L=3(k+1)$. Thus these at most three windows cover the path,
overlap in at least $k$ letters, and every $k$-letter stretch lies in
one of them.
In a generating sequence, each selected window either occurs in $x$
or lies on a path of an earlier rule. A window in $x$ is determined by
any known $k$-letter stretch at a known offset inside it, because $x$ is
$k$-unique.

We also need to transfer a known $k$-letter stretch between a bubble's
two paths. The header specifies the deletion of $d$ at offset $\ell=L-1$
in the longer path $P=A d^\rho B$, or its insertion just before offset
$\ell$ in the shorter path $P'=A d^{\rho-1}B$. A stretch lying entirely
before or after the edit transfers unchanged, with its offset adjusted
by one on the latter side. If a stretch of $P$ contains the deleted bit,
remove it and extend the remaining $k-1$ letters by the next letter on
the right, or by the preceding letter if they reach the end of $P'$.
One additional bit supplies that letter. If a stretch of $P'$ straddles
the insertion position, insert the known bit $d$ and retain the first
$k$ of the resulting $k+1$ letters. Both paths are longer than $k+1$,
so the required extension exists. All offsets are fixed by the header
and the incoming offset.

Thus a known stretch on either path, its offset and at most one bit
determine a known $k$-letter stretch of the negative path. Once the
negative path is known, its header determines the positive path.

We describe the bubbles by recording a short sequence of recovery
instructions. Record every bubble's header and the indices of $c$
roots, one bubble with negative path in $x$ in each connected component
of $S(G)$, together
with their starting positions in $x$. These data recover both paths of the roots.
Whenever a bubble's negative path becomes fully known, recover its
positive path from the header and mark the bubble complete, without
recording an instruction. The roots are complete from the start.
The remaining instructions have two forms.
\begin{itemize}
\item \emph{Start a bubble.} Name a $k$-letter stretch already known
on another bubble's path, and its destination path and offset on a
bubble not yet started. Copy the stretch and transfer it to the
negative path, recording the one extra bit if needed. The supplying
path may itself be only partly recovered.
\item \emph{Fill a window.} Name a selected window of a started but
incomplete bubble and an already known $k$-letter stretch inside it.
For a window in $x$,
these letters and their offset determine the whole window by a unique
lookup in $x$. Otherwise, name a completed bubble of an earlier rule
and an offset on either of its paths, and copy the $L$-gram there.
\end{itemize}
Path names and offsets specify every copied stretch. A bubble starts
at most once, and each of its at most three windows is filled at most
once. Thus there are at most $4\nu$ instructions.
The instructions have a deterministic interpretation: reject missing
letters, failed lookups in $x$, invalid offsets, or inconsistent writes.
Every description that recovers all bubbles determines at most one
underlying set of rules, without using $H$. We count all descriptions;
those that do not recover a generating set only enlarge the upper bound.

\emph{Why such instructions exist.} We recover bubbles depth first,
maintaining the following invariant: the started but incomplete bubbles
are exactly those in the current chain of recursive calls, and their
rule indices strictly decrease from the first call to the current one.
Initially the chain is empty, since the roots are complete.

With no call active, choose an unstarted bubble sharing a de Bruijn
vertex with a completed one. The shared vertex gives $L-1\ge k$ common
letters, so copy a $k$-letter stretch to it and transfer this stretch to
its negative path. Fill a selected window containing it, then the
windows to its left and right in outward order, stopping as soon as the
bubble is complete. Consecutive windows overlap in at least $k$ letters,
so every fill has the required known stretch.

A window either occurs in $x$ or lies on a path of an earlier rule.
If a supplying bubble is incomplete, its rule index is smaller than
that of the current bubble. By the invariant it cannot already have
started, since the other bubbles in the chain have larger rule indices.
Start it from the known stretch at the corresponding offset on that
path, and recover it recursively before filling the original window.
This extends the chain by a bubble with a smaller rule index; completing
the supplier removes it from the chain and resumes the waiting call.

Once the calls and remaining fills have completed the current bubble,
choose another unstarted bubble sharing a vertex with a completed one.
Such a choice exists whenever any bubble remains incomplete: with no
call active, the invariant makes every incomplete bubble unstarted,
and the union of the bubbles in each connected component of $S(G)$ is
connected. All recorded operations agree with the actual paths. Every
bubble is therefore recovered, with each bubble started at most once and
each selected window filled at most once.

\emph{Counting.} Each connected component of $S(G)$ contributes one
root position in $x$,
so these positions cost $n^c$. Each header, root choice and instruction
has polynomially many choices in $L$ and $\nu$: they name only
bubbles, paths, offsets and at most one bit. Every later lookup in $x$
is determined by known letters, so no further position is recorded.
With at most $4\nu$ instructions the total number of descriptions is
\[
   n^c(L\nu)^{O(\nu)}\le n^c(LR)^{a_tR}
\]
for a fixed integer $a_t$, since $\nu=2tR$, $k<L$ and $LR\ge2$.
Every generating set is recovered from one of these descriptions,
proving the bound.
\end{proof}

\section{Proof of the main theorem}\label{sec:main}

This section completes the union bound of Section~\ref{sec:idea}. We
combine the generating-set count with the probability that a fixed set
survives. This bounds the fraction of words in components of $\Gamma_H$
containing odd cycles; the remaining components of $\Gamma_H$ then yield
a large code.

\subsection{Surviving witnesses are rare}\label{sec:count}

Let $a_t$ be the constant in Lemma~\ref{lem:resolve}. Choose a fixed
integer $b_t\ge2$, depending only on $t$, large enough that, for all
sufficiently large $n$,
\begin{equation}
   2^R(2tR+1)(LR)^{a_tR}\le L^{b_tR}
   \qquad(2\le R\le1+4tL^2).
\label{eq:record-cost}
\end{equation}
Such a choice is possible because each factor on the left is at most
$L^{O_t(R)}$: $R\le(4t+1)L^2$ and $2tR+1\le(2t+1)^R$, while $L$ tends to
infinity with $n$. The factor
$2^R$ pays for the extra connected component allowed when $R>L$, and
$2tR+1$ for the possible values of $c(U)$.

\begin{proposition}[Obstructions are rare]\label{prop:obstruction}
If
\begin{equation}
   Q\ge8n^{2t-1}L^{b_t},
\label{eq:labels}
\end{equation}
then, for every $x\in\cW_{n,k}$, the probability that $x\notin\cE$ and
its component in $\Gamma_H$ is not bipartite is at most $1/12$.
\end{proposition}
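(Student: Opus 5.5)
The plan is a union bound over witnesses. By Lemma~\ref{lem:witness}, if $x\notin\cE$ and its component in $\Gamma_H$ is not bipartite, then some surviving witness $U$ exists at $R_L(x)$. Such a $U$ is a generating sequence of $R$ rules with $2\le R\le1+4tL^2$ and $c(U)\le(2t-1)R+1$, with the sharper bound $c(U)\le(2t-1)R$ when $R\le L$. The rules of a generating sequence are linearly independent over $\mathbb Q$. Whether all of them lie in $\cK$ depends only on the underlying set, and by Lemma~\ref{lem:hash-independent} this happens with probability exactly $(2/Q)^R$. So the probability in question is at most
\[
   \sum_{R=2}^{1+4tL^2}\ \sum_{c}\ N(R,c)\,(2/Q)^R ,
\]
where $N(R,c)$ counts the underlying sets of generating sequences with $R$ rules and $c(G)=c$. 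The inner sum runs over $c\le(2t-1)R$ when $R\le L$, and over $c\le(2t-1)R+1$ otherwise. The events are fixed in terms of $x$ alone, so the union bound needs no conditioning on $H$.

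Lemma~\ref{lem:resolve} bounds $N(R,c)\le n^c(LR)^{a_tR}$. There are at most $2tR+1$ values of $c$, and each term is largest at the maximal allowed $c$. For $R\le L$ this maximum gives the positional factor $n^{(2t-1)R}$. For $R>L$ it gives $n^{(2t-1)R+1}$, and since $L\ge\log_2n$ we have $n\le2^L<2^R$, so the factor is at most $2^Rn^{(2t-1)R}$. In both cases, \eqref{eq:record-cost} bounds the terms with a given $R$ by
\[
   n^{(2t-1)R}L^{b_tR}(2/Q)^R=\bigl(2n^{2t-1}L^{b_t}/Q\bigr)^R\le(1/4)^R,
\]
using \eqref{eq:labels}. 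Summing the geometric series over $R\ge2$ gives at most $\sum_{R\ge2}4^{-R}=1/12$, which is exactly the claimed bound.

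The only delicate step is bookkeeping rather than a new idea. We must check that the sharper bound $c\le(2t-1)R$ is available precisely when $2^R$ might fail to dominate $n$, namely for $R\le L$, and that the factor $2^R$ in \eqref{eq:record-cost} is charged uniformly in both regimes, which is harmless. We must also confirm that the hypotheses of Lemma~\ref{lem:resolve} hold: $x\in\cW_{n,k}$, $R\ge1$, and $c\le2tR$, the last being true since $(2t-1)R+1\le2tR$. Finally, Lemma~\ref{lem:witness} is only invoked on the event $x\notin\cE$, so the bound is for the event as stated.
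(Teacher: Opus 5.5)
Your proposal is correct and follows the paper's proof essentially step for step. Both arguments take a union bound over witness sets of each size $R$, count them by Lemma~\ref{lem:resolve} with the extra component for $R>L$ absorbed via $n\le 2^R$, and combine this count with the survival probability $(2/Q)^R$ and \eqref{eq:record-cost} to get $4^{-R}$ per size, which sums to $1/12$.
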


\begin{proof}
The witnesses at a fixed $x\in\cW_{n,k}$ depend only on the catalogue;
$H$ determines which ones survive. We bound this probability without
conditioning on $\Gamma_H$ or $\cE$, both of which depend on $H$.
For each $2\le R\le1+4tL^2$, sum the bound in
Lemma~\ref{lem:resolve} over $0\le c\le(2t-1)R$, allowing also
$c=(2t-1)R+1$ when $R>L$, in which case $n\le2^R$ because
$L\ge\log_2n$. The number of underlying witness sets of size $R$ is at
most
\[
   2^R(2tR+1)n^{(2t-1)R}(LR)^{a_tR}
   \le \bigl(n^{2t-1}L^{b_t}\bigr)^R.
\]
The rules of a witness are linearly independent, so each witness set
survives with probability $(2/Q)^R$ by
Lemma~\ref{lem:hash-independent}. Under \eqref{eq:labels}, the expected
number of surviving witness sets of size $R$ is therefore at most
$4^{-R}$. Nonbipartiteness supplies a surviving witness, so the
probability in the statement is at most
\[
   \sum_{R=2}^{1+4tL^2}4^{-R}
   \le\sum_{R=2}^{\infty}4^{-R}=\frac1{12}.
\]
\end{proof}

\subsection{Selecting the code}

We now combine the two exceptional-family bounds and select a common
label class from the remaining bipartite components of $\Gamma_H$.

\begin{theorem}\label{thm:main}
For every fixed $t\ge2$ and every sufficiently large $n$,
\[
 \red^*(n)\le(2t-1)\log_2n+O_t(\log_2\log_2n).
\]
\end{theorem}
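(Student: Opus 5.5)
Choose $Q$ to be the smallest integer satisfying \eqref{eq:labels}, namely $Q=\lceil 8n^{2t-1}L^{b_t}\rceil$. Since $L=3(k+1)=O(\log_2 n)$, this gives $\log_2 Q=(2t-1)\log_2n+O_t(\log_2\log_2n)$. With this $Q$, bound the expected number of words of $\cW_{n,k}$ that are lost.

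By Lemma~\ref{lem:nonsep}, each $x\in\cW_{n,k}$ lies in $\cE$ with probability at most $2c_tn^{2t-1}L/Q\le c_t/(4L^{b_t-1})$. Because $b_t\ge2$ and $L\to\infty$, this is below $1/12$ for large $n$. By Proposition~\ref{prop:obstruction}, $x\notin\cE$ and $x$ lies in a nonbipartite component of $\Gamma_H$ with probability at most $1/12$. By linearity of expectation, the set $\mathcal G\subseteq\cW_{n,k}\setminus\cE$ of vertices lying in bipartite components has $\mathbb E|\mathcal G|\ge\frac56|\cW_{n,k}|\ge\frac5{12}2^n$, using Lemma~\ref{lem:family}. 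Fix a hash $H$ attaining at least this expectation.

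Now extract the code. Two-colour each bipartite component of $\Gamma_H$ on $\mathcal G$. Some colour class, taken together with its choice of side per component, contains at least $|\mathcal G|/2$ words. One convenient way to do this is to keep the larger side of each component; the two-colouring with one extra bit described in the overview is equivalent.

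Let $\mathcal C'$ be this set of words. No two words of $\mathcal C'$ that share a label are confusable. Indeed, two confusable words of $\mathcal G$ with equal labels are joined by an edge of $\Gamma_H$: both are vertices of $\Gamma_H$, since they lie outside $\cE$, and confusability together with equal labels is exactly the edge condition. Two words on the same side of a bipartite component are never adjacent, and words in different components are never adjacent.

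By pigeonhole over the $Q$ labels, some label class $\cC\subseteq\mathcal C'$ has $|\cC|\ge|\mathcal C'|/Q\ge\frac5{24}2^n/Q$. Any two distinct words in $\cC$ share a label, so they are non-confusable at radius $t$. Hence $\cC$ corrects $t$ insertions and deletions, and
\[
\red(\cC)\le\log_2Q+O(1)=(2t-1)\log_2n+O_t(\log_2\log_2n).
\]

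There is no real obstacle left; the work was done in Proposition~\ref{prop:obstruction}. The only points to check are the following.
\begin{itemize}
\item The constant from Lemma~\ref{lem:nonsep} is absorbed, because $b_t\ge2$ leaves a surplus factor of $L^{-1}$.
\item Every confusable equal-label pair inside $\mathcal G$ really is an edge of $\Gamma_H$.
\item The $O_t(\log_2\log_2n)$ term comes from $L^{b_t}=O_t((\log_2n)^{b_t})$.
\item $n$ is large enough that $n\ge L$ and that \eqref{eq:record-cost} holds.
\end{itemize}
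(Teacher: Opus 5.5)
Your proof is correct and follows the paper's own argument essentially step for step. It uses the same choice $Q=\lceil 8n^{2t-1}L^{b_t}\rceil$, the same two discard bounds from Lemma~\ref{lem:nonsep} and Proposition~\ref{prop:obstruction}, the larger side of each bipartite component of $\Gamma_H$, and the largest label class. The only differences are in constants: you bound the nonseparated probability by $1/12$ and fix $H$ meeting the expectation, where the paper uses $1/4$ and fixes $H$ retaining half of $\cW_{n,k}$. These change only the $O(1)$ term.
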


\begin{proof}
Choose the integer
\[
   Q=\left\lceil8n^{2t-1}L^{b_t}\right\rceil
\]
and draw $H$ as in Definition~\ref{def:hash}. Discard a word
$x\in\cW_{n,k}$ if it belongs to the nonseparated family $\cE$ of
Lemma~\ref{lem:nonsep}, or if $x\notin\cE$ and its component in
$\Gamma_H$ is not bipartite. The first event has probability at most
\[
   \frac{2c_tLn^{2t-1}}{Q}\le\frac{c_t}{4}L^{1-b_t},
\]
which tends to zero and is at most $1/4$ for large $n$, since
$b_t\ge2$. The second event has probability at most $1/12$ by
Proposition~\ref{prop:obstruction}. The expected fraction of discarded
words is thus at most $1/3<1/2$.

Fix $H$ for which at least half of $\cW_{n,k}$ remains after the
discards, and let $\cC_0$ be the remaining words. Lemma~\ref{lem:family} gives
$|\cC_0|\ge2^{n-2}$. These words form a union of bipartite components
of $\Gamma_H$. Choose the larger side of the bipartition of each such
component of $\Gamma_H$.
Their union is an independent set $\mathcal I$ with
$|\mathcal I|\ge|\cC_0|/2\ge2^{n-3}$. Choose the largest label class in $\mathcal I$,
\[
   \cC=\{x\in\mathcal I:\ \eta(R_L(x))=a\}
   \qquad\text{for some }a\in\{0,\ldots,Q-1\}.
\]
Then $|\cC|\ge2^{n-3}/Q$. Two distinct confusable words of $\cC$
would have equal labels and hence be adjacent in $\Gamma_H$,
contradicting the independence of $\mathcal I$. Thus $\cC$ corrects $t$
insertions and deletions, and
\[
   \red(\cC)\le\log_2Q+3
   =(2t-1)\log_2n+b_t\log_2L+O_t(1).
\]
Since $L=O(\log_2n)$, this is the claimed bound.
\end{proof}

The $O_t(\log_2\log_2n)$ term comes from the description costs other
than positions in $x$, which are polynomial in $L$ per rule because
witnesses have $O_t(L^2)$ rules.

\section{Related work}\label{sec:related}

Alon, Bourla, Graham, He and Kravitz~\cite{AlonEtAl} sharpened
\eqref{eq:lev} by a doubly logarithmic term, to
\begin{equation}
   \red^*(n)\ \le\ 2t\log_2n-\log_2\log_2n+O_t(1) .
\label{eq:alon}
\end{equation}
Their argument shows that the graph of confusable pairs has few triangles
and applies a bound of Ajtai, Koml\'os and Szemer\'edi~\cite{AKS} for such
graphs, which gains a logarithmic factor over the greedy choice. A bound
that uses only the degree and the number of triangles cannot gain more,
since sparse random graphs of the same degree also have few triangles and
independence number of order $(N/d)\log d$ for $N$ vertices of
degree~$d$~\cite{Frieze}.

Explicit constructions are known at larger coefficients, except at
$t=2$, where Guruswami and H\aa stad~\cite{GuruswamiHastad} match the
coefficient $4$ of \eqref{eq:lev}; for $t\ge3$, see
\cite{BGZ,SimaGabrysBruck,SongEtAl}. Guruswami and H\aa stad also give
explicit two-deletion codes with redundancy $3\log_2n+O(\log_2\log_2n)$
that decode to a list of at most two words; the codes here decode
uniquely.

Counts of substrings of a fixed length are a standard representation of
strings. In string matching they define the $q$-gram distance, which
bounds the edit distance from below~\cite{Ukkonen}. In coding for DNA
storage, codes are designed so that sequences are distinguished by these
counts, called profile vectors~\cite{KiahPuleoMilenkovic}, and strings are
reconstructed from their substring spectrum~\cite{MarcovichYaakobi}. Here
the counts serve as the vector on which a random linear hash acts, and the
fact that an edit changes them only near its place is what lets
overlapping local changes save positions.

\section{Further questions}\label{sec:questions}

Five questions remain. First, can the correction term be bounded
independently of $n$? More precisely, does
\[
 \red^*(n)\le(2t-1)\log_2n+O_t(1)
\]
hold? The correction in the present upper bound comes from the factor
$L^{b_t}$, which pays for the local record alphabet and the indexing of
bubbles. These costs are polynomial per rule, and their logarithm gives
the $O_t(\log\log n)$ remainder. The current estimate for nonseparated
conflicts in Lemma~\ref{lem:nonsep} also carries a factor $L$. Sharper
estimates for these contributions, using the structure of bubble
vectors and edit alignments, could improve the correction term.

Second, can this approach improve the leading coefficient below $2t-1$?
For a set $U$ of $R$ rules connected by rule adjacencies,
Lemma~\ref{lem:block} gives $c(U)\le(2t-1)R+1$, and dependent relations
provide further savings. Stronger structural counting estimates or a
colouring argument allowing more than two colours are possible directions.
A smaller coefficient would also have to overcome a constraint that
involves neither the colour nor Lemma~\ref{lem:block}:
Lemma~\ref{lem:nonsep} controls these conflicts by the upper bound
$2c_tLn^{2t-1}/Q$. If $Q=n^{\alpha+o(1)}$ with
$\alpha<2t-1$, this bound diverges and no longer shows that only a
small fraction of words in $\cW_{n,k}$ is discarded. Improving the leading
coefficient within this argument therefore also requires a sharper
treatment of nonseparated conflicts, regardless of improvements to the
colouring or to the bound on $c(U)$.
The proof gives no barrier theorem excluding other improvements.

Third, what is the sharp leading-order redundancy? The sphere-packing
bound has coefficient $t$ and Theorem~\ref{thm:main} has coefficient
$2t-1$. In particular
\[
 t\le\liminf_{n\to\infty}\frac{\red^*(n)}{\log_2n}
 \le\limsup_{n\to\infty}\frac{\red^*(n)}{\log_2n}\le2t-1.
\]
Determining these limits, including whether they agree, remains open for
every $t\ge2$.

Fourth, does the method apply to other error models? It uses that a
confusable pair is named by $2t$ positions, that on a dense family of
words the spectrum difference splits into $2t$ local pieces with disjoint
supports, and that overlapping pieces share enough letters to locate one
another. The gain is one factor of $n$ over the greedy choice, so it gives
nothing where algebraic constructions do better, as for substitution
errors, where BCH codes reach coefficient $t$. Binary codes correcting $t$
edits that include substitutions, codes correcting $t$ bursts of deletions
of a fixed length, and codes over a fixed larger alphabet appear to share
these properties; whether the coefficient $2t-1$ holds for them is open.

Fifth, are there explicit codes with the same leading coefficient? The
code of Theorem~\ref{thm:main} is chosen at random, and even for a fixed
hash, choosing a side of every component is not known to be efficient.
For two deletions, the list-two codes of Guruswami and
H\aa stad~\cite{GuruswamiHastad} already have redundancy
$3\log_2n+O(\log_2\log_2n)$. The pairs of codewords that their checks do
not separate form a graph. If discarding a small fraction of words made
this graph bipartite, with a colouring that is efficient to compute, one
extra bit could give explicit uniquely decodable codes with coefficient
$3$. Whether this holds is open. An explicit construction would also
require efficient encoding into the retained family without increasing
the leading redundancy coefficient.

\section*{Acknowledgements}

The proof and presentation were developed in interaction with
large language models. The author takes full responsibility for the
paper.


\begin{thebibliography}{99}

\bibitem{AKS}
M.~Ajtai, J.~Koml\'os, and E.~Szemer\'edi,
``A note on Ramsey numbers,''
\emph{Journal of Combinatorial Theory, Series A}, vol.~29,
pp.~354--360, 1980.

\bibitem{AlonEtAl}
N.~Alon, G.~Bourla, B.~Graham, X.~He, and N.~Kravitz,
``Logarithmically larger deletion codes of all distances,''
\emph{IEEE Transactions on Information Theory}, vol.~70, no.~1,
pp.~125--130, 2024. doi:10.1109/TIT.2023.3304565.

\bibitem{BGZ}
J.~Brakensiek, V.~Guruswami, and S.~Zbarsky,
``Efficient low-redundancy codes for correcting multiple deletions,''
\emph{IEEE Transactions on Information Theory}, vol.~64, no.~5,
pp.~3403--3410, 2018.

\bibitem{RepeatFree}
O.~Elishco, R.~Gabrys, E.~Yaakobi, and M.~M\'edard,
``Repeat-Free Codes,''
\emph{IEEE Transactions on Information Theory}, vol.~67, no.~9,
pp.~5749--5764, 2021. doi:10.1109/TIT.2021.3092056.

\bibitem{Frieze}
A.~M.~Frieze,
``On the independence number of random graphs,''
\emph{Discrete Mathematics}, vol.~81, pp.~171--175, 1990.

\bibitem{GuruswamiHastad}
V.~Guruswami and J.~H\aa stad,
``Explicit two-deletion codes with redundancy matching the existential bound,''
\emph{IEEE Transactions on Information Theory}, vol.~67, no.~10,
pp.~6384--6394, 2021.

\bibitem{KiahPuleoMilenkovic}
H.~M.~Kiah, G.~J.~Puleo, and O.~Milenkovic,
``Codes for DNA sequence profiles,''
\emph{IEEE Transactions on Information Theory}, vol.~62, no.~6,
pp.~3125--3146, 2016.

\bibitem{Levenshtein1966}
V.~I.~Levenshtein,
``Binary codes capable of correcting deletions, insertions, and reversals,''
\emph{Soviet Physics Doklady}, vol.~10, pp.~707--710, 1966.
Russian original: \emph{Dokl. Akad. Nauk SSSR}, vol.~163, no.~4,
pp.~845--848, 1965.

\bibitem{MarcovichYaakobi}
S.~Marcovich and E.~Yaakobi,
``Reconstruction of strings from their substrings spectrum,''
\emph{IEEE Transactions on Information Theory}, vol.~67, no.~7,
pp.~4369--4384, 2021.

\bibitem{SimaGabrysBruck}
J.~Sima, R.~Gabrys, and J.~Bruck,
``Optimal Systematic $t$-Deletion Correcting Codes,''
in \emph{Proc.\ IEEE International Symposium on Information Theory (ISIT)},
2020, pp.~769--774.

\bibitem{SongEtAl}
W.~Song, N.~Polyanskii, K.~Cai, and X.~He,
``Systematic Codes Correcting Multiple-Deletion and Multiple-Substitution Errors,''
\emph{IEEE Transactions on Information Theory}, vol.~68, no.~10,
pp.~6402--6416, 2022.

\bibitem{Ukkonen}
E.~Ukkonen,
``Approximate string-matching with $q$-grams and maximal matches,''
\emph{Theoretical Computer Science}, vol.~92, no.~1, pp.~191--211, 1992.

\bibitem{VT}
R.~R.~Varshamov and G.~M.~Tenengolts,
``Codes which correct single asymmetric errors,''
\emph{Avtomatika i Telemekhanika}, vol.~26, no.~2, pp.~288--292, 1965;
English translation in \emph{Automation and Remote Control}, vol.~26,
no.~2, pp.~286--290, 1965.


\end{thebibliography}
\end{document}